\newif\ifprocs
\renewcommand{\paragraph}[1]{\subparagraph{#1}}
\newtheorem*{theorem*}{Theorem}
\newtheorem{question}{Question}
\newtheorem{fact}[theorem]{Fact}
\newtheorem{theorem}{Theorem}[section]
\newtheorem*{theorem*}{Theorem}
\newtheorem{lemma}[theorem]{Lemma}
\newtheorem{claim}[theorem]{Claim}
\newtheorem{definition}{Definition}
\DeclareMathOperator{\poly}{poly}
\DeclareMathOperator{\polylog}{polylog}
\DeclareMathOperator{\lcs}{lcs}
\DeclareMathOperator*{\argmin}{arg\,min}
\newcommand{\dt}{\Delta}
\newcommand{\sym}{\mathcal{S}}
\renewcommand{\lcs}{\mathsf{LCS}}
\newcommand{\obj}{\texttt{Obj}}
\newcommand{\opt}{\texttt{OPT}}
\newcommand{\BI}{\textsc{BestFromInput}\xspace}
\newcommand{\BRO}{\textsc{MedianReconstruct}\xspace}
\newcommand{\FA}{\textsc{ApproxMedian}\xspace}
\newcommand{\FkA}{\textsc{Approx $k$-Median}\xspace}
\newcommand{\FkAS}{\textsc{Approx $k$-MedianStreaming}\xspace}
\providecommand{\eqdef}{\coloneqq}
\title{Clustering Permutations: New Techniques with Streaming Applications}
\author{Diptarka Chakraborty}{National University of Singapore, Singapore}{diptarka@comp.nus.edu.sg}{} {Work partially supported by an MoE AcRF Tier 2 grant (WBS No.\ A-8000416-00-00) and an NUS ODPRT grant (WBS No.\ A-0008078-00-00).}
\author{Debarati Das}{Pennsylvania State University, USA}{debaratix710@gmail.com}{}{}
\author{Robert Krauthgamer}{Weizmann Institute of Science, Israel}{robert.krauthgamer@weizmann.ac.il}{} {Work partially supported by ONR Award N00014-18-1-2364, the Israel Science Foundation grant \#1086/18, and a Minerva Foundation grant,
  and by the Israeli Council for Higher Education (CHE) via the Weizmann Data Science Research Center.}
\authorrunning{D.~Chakraborty, D.~Das and R.~Krauthgamer} 
\author{%
Diptarka Chakraborty%
\thanks{National University of Singapore.
  Work partially supported by an MoE AcRF Tier 2 grant (WBS No. A-8000416-00-00) and an NUS ODPRT grant (WBS No. A-0008078-00-00)
    Email: \texttt{diptarka@comp.nus.edu.sg}
  }
\and
Debarati Das%
\thanks{Pennsylvania State University.
        Email: \texttt{debaratix710@gmail.com}
}
\and
Robert Krauthgamer%
  \thanks{Weizmann Institute of Science.
    Work partially supported by ONR Award N00014-18-1-2364, the Israel Science Foundation grant \#1086/18, and a Minerva Foundation grant,
    and by the Israeli Council for Higher Education (CHE) via the Weizmann Data Science Research Center.
    Email: \texttt{robert.krauthgamer@weizmann.ac.il}
  }
}
\keywords{Clustering, Approximation Algorithms, Ulam Distance, Rank Aggregation, Streaming} 
\begin{document}
\maketitle

\begin{abstract}
We study the classical metric $k$-median clustering problem over a set of input rankings (i.e., permutations), which has myriad applications, from social-choice theory to web search and databases. 
A folklore algorithm provides a $2$-approximate solution in polynomial time
for all $k=O(1)$,
and works irrespective of the underlying distance measure, so long it is a metric;
however, going below the $2$-factor is a notorious challenge. 
We consider the Ulam distance, a variant of the well-known edit-distance metric, where strings are restricted to be permutations.
For this metric, Chakraborty, Das, and Krauthgamer [SODA, 2021]
provided a $(2-\delta)$-approximation algorithm for $k=1$, where $\delta\approx 2^{-40}$.
    
Our primary contribution is a new algorithmic framework for clustering a set of permutations.
Our first result is a $1.999$-approximation algorithm
for the metric $k$-median problem under the Ulam metric,
that runs in time $(k \log (nd))^{O(k)}n d^3$
for an input consisting of $n$ permutations over $[d]$.
In fact, our framework is powerful enough to extend this result
to the streaming model (where the $n$ input permutations arrive one by one)
using only polylogarithmic (in $n$) space.
Additionally, we show that similar results can be obtained even in the presence of outliers, which is presumably a more difficult problem.
\end{abstract}

\newpage

\section{Introduction}
\label{sec:intro}
Clustering is one of the ubiquitous tasks used in data analysis, which partitions a set of objects into several groups so that similar items lie in the same group. One of the most widely studied variants is the \emph{metric $k$-median clustering}.
In this problem, given an input set $S$ of $n$ data points, the goal is to find a set of $k$ median points from the underlying space (not necessarily from $S$) such that the sum of distances of all the data points in $S$ to its nearest median point is minimized. (See Section~\ref{sec:prelims} for a formal definition.) Throughout this paper, we consider the above variant and refer to it simply by the $k$-median problem. For $k=1$, the problem is also referred to as the \emph{geometric median} (or simply median) problem.
For many applications, it suffices to find an approximate solution to the problem, i.e., find a set of points from the metric space whose objective value approximates the minimum multiplicatively (for a formal definition, see Section~\ref{sec:prelims}).
This problem has been studied extensively in theory as well as in applied domains, both for $k=1$ and arbitrary $k$.
The complexity of the problem varies with the underlying metric space.
For $k=1$, perhaps the most well-studied version is over a Euclidean space (aka the Fermat-Weber problem),
for which a near-linear time $(1+\epsilon)$-approximation algorithm (for any $\epsilon > 0$) is known~\cite{cohen2016geometric}.
Other spaces that have been considered for the median problem include Hamming (folklore), the edit metric~\cite{Sankoff75, kruskal1983, NR03,chakraborty2021approximating}, rankings/permutations~\cite{DKNS01,ACN08,chakraborty2021approximating}, Jaccard distance between sets~\cite{CKPV10}, and many more~\cite{fletcher2008robust, minsker2015geometric, cardot2017online}.
The problem is clearly more challenging for general $k$.
For points in $\mathbb{R}^d$, Chen~\cite{Chen09} gave $(1+\epsilon)$-approximation algorithm with running time $O(ndk + 2^{(k/\epsilon)^{O(1)}}d^2 \log^{k+2}n)$ (see references therein for an overview) using \emph{coresets}.
For arbitrary metric spaces, $O(1)$-approximation algorithms (with a trade-off between the approximation factor and the running time) are known, e.g.~\cite{charikar1999constant, indyk1999sublinear, arya2001local, guha2003clustering, mettu2004optimal, Chen09}. Better approximation results are known for specific metric spaces, like Hamming~\cite{OR00}, shortest-path metric in a graph~\cite{thorup2005quick}, etc. 

One of the fundamental metrics (other than Euclidean and Hamming) that finds numerous applications is the \emph{edit metric}. The edit distance is a well-known dissimilarity measure between strings, which counts the minimum number of basic edit operations, like character insertion, deleting, and substitution, required to transform one string into the other. The $k$-median clustering problem over the edit metric is of utter importance in various domains, including computational biology~\cite{gusfield1997, pevzner2000computational}, DNA storage system~\cite{GBCDLSB13, RMRAJY17}, speech recognition~\cite{kohonen1985median}, and classification~\cite{martinez2000use}.
For $k=1$, it is also referred to as the \emph{median string} problem~\cite{kohonen1985median} (an equivalent formulation is known as \emph{multiple sequence alignment}~\cite{gusfield1997}). Despite being an important problem, we only know that it is NP-hard, even for $k=1$~\cite{HC00, NR03}. Although several heuristics exist~\cite{casacuberta1997greedy, kruzslicz1999improved, pedreira2007spatial, abreu2014new, hayashida2016integer, Mirabal19}, we do not know any approximation result better than what holds for arbitrary metrics.

We study the $k$-median problem over the \emph{Ulam metric},
which is a variant of the edit metric, by restricting strings to be permutations.
The Ulam metric of dimension $d$ is $(\sym_d,\dt)$,
where $\sym_d$ is the set of all the permutations over $[d]$
and $\dt(x,y)$ is the minimum number of character-move operations
needed to transform $x$ into $y$~\cite{AD99}.%
\footnote{In a permutation, a character move can be thought of as
  ``picking up'' a character and ``placing'' it in another position.
  Since it is equivalent to one deletion and one insertion,
  one can define the distance alternatively using character insertions and deletions~\cite{CMS01}.
}
Studying the Ulam metric is beneficial from two facets. First, it is a close variant of the edit metric defined over permutations and thus captures many inherent difficulties of the edit metric.
Thus, progress in the Ulam metric may provide insights to the same problem under a more general edit metric.
Second, it is a natural dissimilarity measure between rankings that arise in diverse areas ranging from social-choice theory~\cite{brandt2016handbook} to information retrieval~\cite{Harman92a},
and indeed has been studied from different algorithmic
perspectives~\cite{CMS01, CK06, AK10, AN10, NSS17, BS19}.

There is a folklore $O(n^{k+1}\cdot f(d))$-time algorithm 
(where $f(d)$ denotes the time to compute the distance between two points)
that provides a $2$-approximate solution to the $k$-median problem
for an arbitrary metric space,
by simply reporting the best $k$-tuple from the input set as the median points (see Procedure~\ref{alg:best-input}).
Breaking below the $2$-factor even in $n^{O(k)}$ time is one of the most notorious challenges, even for some specific metrics. So far, we do not know any affirmative result for the Ulam metric.
Very recently, Chakraborty, Das, and Krauthgamer~\cite{chakraborty2021approximating} provided a $(2-\delta)$-approximation algorithm only for the special case of $k=1$ (aka \emph{rank aggregation} under the Ulam),
where $\delta\approx 2^{-40}$ is a tiny constant.
On the contrary, the median problem concerning \emph{Kendall's tau distance},
yet another popular dissimilarity measure over permutations
(e.g.,~\cite{kemeny1959mathematics, young1988condorcet, young1978consistent, DKNS01, ACN08}),
possesses a PTAS~\cite{kenyon2007rank, Schudy2012thesis},
which can be extended to $k$-median using coresets~\cite{Danos21}. 

We fruther study this problem ($k$-median under the Ulam metric)
in the streaming model, i.e., when the input arrives sequentially;
more specifically, the $n$ input permutations $x_1,x_2,\ldots,x_n \in \sym_d$
arrive one by one.
This is sometimes called an insertion-only stream,
because an input permutation cannot be deleted after its arrival.
One of the challenges in this model is that the algorithm cannot freely access the input,
and the main goal is to devise an algorithm with an amount of space
that is sublinear (ideally, logarithmic) in $n$. 
The $k$-median problem in the insertion-only streaming model has also been studied extensively, e.g.,~\cite{guha2003clustering, Chen09, BravermanFLSZ21, braverman2021metric}.
However, no non-trivial result (other than what holds for an arbitrary metric)
is known for the Ulam metric.

\subsection{Our Contribution}
Our main result is a streaming algorithm for the $k$-median problem under the Ulam metric that achieves better than the 2-approximation factor.

\begin{restatable}{theorem}{mainthm}
\label{thm:main}
There is a (randomized) streaming algorithm that, given a set of permutations $x_1,x_2,\ldots,x_n \in \sym_d$ (arriving in streaming fashion), provides a 1.9999995-approximate solution to the (metric) $k$-median problem under the Ulam metric, using only $k^2 d \polylog(nd)$ bits of space, with high probability. Moreover, the algorithm has update time $(k\log n)^{O(1)} d \log^2 d$ and query time $(k\log (nd))^{O(k)}d^3$.
\end{restatable}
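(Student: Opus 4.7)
The plan is to build the result from three layers: a single-cluster subroutine $\FA$ that beats the $2$-approximation barrier for rank aggregation, a lifting to arbitrary $k$ via a sampling-based enumeration of cluster representatives, and a sketching layer that compresses everything into polylogarithmic space. The single-cluster subroutine itself combines two procedures: $\BI$, which outputs the best point drawn from the input stream (the folklore $2$-approximate baseline), and $\BRO$, which attempts to reconstruct a strictly cheaper centre from aggregate statistics of the cluster. The final algorithm returns whichever of the two has smaller cost.

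For a fixed optimal cluster $C\subseteq\sym_d$ with centre $\mu^{\star}$ and cost $\opt(C)=\sum_{y\in C}\dt(\mu^{\star},y)$, I would prove a dichotomy. Either some input point $x\in C$ satisfies $\sum_{y\in C}\dt(x,y)\le(2-\delta)\opt(C)$, in which case $\BI$ already wins; or every input point is ``far'' from being a $1$-median, in which case triangle-inequality rigidity forces most pairs in $C$ to lie at distance close to $2\,\opt(C)/|C|$ and to be ``balanced'' around $\mu^{\star}$. This latter regime is exactly what $\BRO$ exploits in the $k=1$ analysis of Chakraborty--Das--Krauthgamer: one fixes character positions in the output permutation one at a time using a majority/median statistic over the cluster, yielding a reconstructed centre of cost at most $(2-\delta)\opt(C)$. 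Taking the pointwise minimum of the two outputs gives the required $(2-\delta)$-approximation for $k=1$.

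To handle general $k$, I would enumerate a family of $(k\log(nd))^{O(k)}$ candidate $k$-tuples of representatives by Indyk-style subsampling of the stream, so that with high probability at least one tuple picks one representative per optimal cluster. For each such tuple I would assign every stream point to its nearest representative and run $\FA$ on the resulting $k$ parts; the triangle inequality confines the cost overhead of this approximate partition to lower-order terms in $\delta$. This is how the per-cluster analysis lifts to $\FkA$, losing only constants in the final approximation factor.

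The streaming implementation is where I expect the main difficulty. For $\BI$ it suffices to maintain, for each of the $(k\log(nd))^{O(k)}$ candidate tuples, a running sum of Ulam distances to the stream, costing $\polylog(nd)$ words per tuple and $O(d\log^2 d)$ update time via the LIS characterisation of the Ulam metric. For $\BRO$, however, the offline routine inspects the entire cluster, so I would instead maintain small linear sketches of ``position-order'' statistics of the stream (for instance, the number of permutations placing character $a$ before character $b$), each occupying $\widetilde{O}(d)$ bits per candidate cluster. The main obstacle is the error analysis: I need to show that replacing exact cluster inspection in $\BRO$ by these sublinear-space sketches perturbs the reconstructed cost by only $o(\delta)\cdot\opt(C)$, and that after absorbing the additional slack from the representative-enumeration step the final approximation ratio still comes out below $1.9999995$. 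Once that error budget is established, the per-tuple sketches combine to give the stated $k^2 d\polylog(nd)$ space, and running the offline reconstruction once on the aggregated sketches at query time yields the $(k\log(nd))^{O(k)}d^3$ query-time bound.
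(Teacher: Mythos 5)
Your proposal misses the central new idea of the paper, and this omission is precisely what leaves your streaming layer unresolved. The paper's \BRO{} does \emph{not} reconstruct a centre from aggregate statistics over the whole cluster (that is the approach of the prior work of Chakraborty--Das--Krauthgamer, which you explicitly fall back on). The key structural lemma here (\autoref{lem:two-cases}) is a dichotomy per cluster: either some input permutation already beats the $2$-factor, or there exist \emph{five} input permutations whose sets of unaligned symbols (w.r.t.\ an optimal median) pairwise overlap in at most an $\epsilon$-fraction; from those five permutations alone, a majority vote on each pair of symbols yields a tournament whose triangle-removal leaves a topological order within $O(\epsilon)\cdot\opt/n$ of the optimal median. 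Because every candidate centre is thus derived from at most $5$ inputs, the implicit candidate set $M$ has size $O(n^5)$, so $\log|M|=O(\log n)$. This is what makes the streaming version work: a multi-rate sampling with distance-based pruning stores only $O(k\log^4 n\log d)$ permutations yet, for every cluster contributing $\Omega(\opt/k)$, provably captures either a good single representative or a good $5$-tuple; small-objective clusters are handled separately by monotone faraway sampling (an $O(1)$-approximation plus $\rho\opt/k$ additive error, which is absorbed since those clusters contribute little); and a $(k,\lambda)$-coreset of size $O(k^2\log^2 n)$ with respect to the implicit set $M$ is what lets the algorithm evaluate, at query time, candidates (including \BRO{} outputs) that do not exist until the stream has ended.

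Concretely, three steps of your plan would fail as written. First, maintaining pairwise order counts ``number of permutations placing $a$ before $b$'' requires $\Theta(d^2)$ counters, not $\widetilde O(d)$ bits, already exceeding the space bound; and the error analysis you defer (replacing exact cluster inspection by sketches) is exactly the difficulty the paper's $5$-permutation lemma is designed to avoid. Second, maintaining a running distance sum for each of $(k\log(nd))^{O(k)}$ candidate tuples is impossible both in space (that many counters exceeds $k^2d\polylog(nd)$ for non-constant $k$) and in principle, since the reconstructed centres are not available during the stream; you need a coreset against the implicit candidate set, which in turn is only small because $|M|=O(n^5)$. Third, your lifting to general $k$ by assigning stream points to nearest representatives and re-running $\FA$ per part is not what the paper does and does not obviously control the error: the paper instead puts \emph{all} candidates (inputs plus all \BRO{} outputs on $5$-subsets of the sample) into one pool and selects the best $k$-subset via the coreset, so no explicit approximate partition of the stream is ever needed.
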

It is worth mentioning that here the input size is $O(nd\log d)$ bits (since each permutation of $\sym_d$ requires $O(d \log d)$ bits).
In our algorithm, we only need to maintain a subset of $k^2\polylog (nd)$ permutations. For $k=1$, we can improve the space-bound to only $O(d \log d \log^2 n)$ (\autoref{thm:1-median-stream}). Also, all our algorithms require only a single pass.

To achieve our result, we first develop a new algorithm framework that provides a 1.999-approximate solution to the median problem (i.e., for $k=1$) \emph{deterministically} (\autoref{thm:median}). Compared to the previous approximation result of~\cite{chakraborty2021approximating}, our algorithm is superior in various aspects. First, the approximation factor is 1.999, an improvement over the $2-\delta$, where $\delta\approx 2^{-40}$ in~\cite{chakraborty2021approximating}.
Second, our analysis is much simpler than that in~\cite{chakraborty2021approximating}, which was roughly divided into the following two cases:
In the first case, a large fraction of the optimum objective value is ``concentrated on a small fraction of symbols of $[d]$'',
and in the second is the optimum objective value is ``spread out throughout all the symbols of $[d]$'' (see also the Technical Overview below).
Analyzing these two cases separately and then combining them makes the entire analysis quite complicated and also affects the approximation guarantee.
In contrast, our analysis is pretty simple and argues that for $k=1$, 
\begin{itemize}
\item either there exists five input permutations from which we can reconstruct an approximate median (by essentially solving the \emph{feedback vertex set} problem over a special type of \emph{tournament graph});
\item or, there is an input permutation around which there is a large cluster (and thus would also provide a good approximate solution to the entire input);
\item or, there is an input permutation that is close to an (unknown) optimal median (and thus, again, would be a good approximate median).
\end{itemize}

Third, and perhaps most importantly, because the final output is
either an input permutation or derived from only five input permutations,
we essentially show that there are at most $n^5$ candidate points,
one of which provides a 1.999-approximate solution.
For general $k$, the same argument is clearly true also for each of the clusters
induced by an optimal $k$-median solution,
and thus we can easily extend our framework to the $k$-median problem, for arbitrary $k$, with running time $n^{O(k)}d^3$ (\autoref{thm:k-median}).
This running time can further be improved to $(k \log (nd))^{O(k)}n d^3$ using the sampling technique described in Section~\ref{sec:kstreaming}. Further, we can extend it to the $k$-median problem with \emph{outliers} (see Section~\ref{sec:outlier} for the definition and the details), presumably a more challenging problem. Such extensions were not possible to the algorithm of~\cite{chakraborty2021approximating}. Lastly, our algorithm can be implemented in a streaming fashion by storing (with the help of a "clever" sampling) only $k^2\poly \log (nd)$ input permutations and then running our offline algorithm (with a slight modification) on them. We emphasize that randomization is used only while implementing it in the streaming model.

\subsection{Technical Overview}

The algorithm of Chakraborty, Das and Krauthgamer~\cite{chakraborty2021approximating} computes,
given a set $S$ of $n$ permutations over $[d]$,
a $(2-\delta)$-approximate $1$-median under the Ulam metric for $\delta\approx 2^{-40}$.
Their algorithm starts with the following simple observation:
Fix an optimal median permutation $y^*$,
and let $\ell$ be the average distance between $y^*$ and the input permutations.
Now if $S$ contains a permutation $x$ such that the distance between $y$ and $x$ is $\le (1-\delta)\ell$, for some constant $\delta>0$,
then $x$ provides a $(2-\delta)$-approximation to the optimal objective value.
Otherwise, they first considered the case where the average distance is large, i.e., $\ell=\Omega(d)$.
In this case, using a counting argument, they show that there exists $x\in S$
such that at least $\Omega(n)$ other input permutations are at a distance at most $(2-\delta')\ell$. 
Thus taking $x$ as an approximate median provides better-than-$2$ approximation to the optimal objective value.
Of course, this $x\in S$ is not known, but outputting the input permutation that minimizes the objective serves the purpose.
In the other case, where the average distance $\ell$ is small,
if the cost is distributed only over a few symbols of $[d]$,
then restricting the input permutations only to these symbols gives rise to an instance with a large average objective,
and thus one can reuse the large-distance algorithm mentioned above.
If, however, the total cost is distributed over many symbols,
then for almost all the symbols in $[d]$, the cost is small.
Now consider two such small-cost symbols;
as both are aligned together in most of the input permutations,
their relative order in all these permutations is the same as that in $y^*$,
and thus can be computed by examining all the input permutations and taking the majority.
Moreover, using random sampling, this majority can be decided by looking at only $O(\log n)$ permutations.
However, this dependency on $\log n$ input permutations becomes ineffective
when we try to lift these ideas to $k$-median for $k>1$,
and this is where our new framework plays a crucial role. 

We start by providing an algorithm that shows that either an input permutation breaks the $2$-factor or the input set $S$ contains $5$ permutations from which we can derive a permutation whose distance is small from an optimal median $y^*$, and it thus gives better-than-$2$ approximation to the optimal objective value.
This dependency on the number inputs is significantly better than in~\cite{chakraborty2021approximating},
and it is specifically important for the $k$-median problem (general $k>1$),
where the grouping of the input permutations into $k$ clusters is not known. 
However, using our framework,
one can try all $\binom{n}{5}$ different ways of picking $5$ input permutations
and use them to derive candidates for approximate median.
These can serve as candidates for all the $k$ clusters (simultaneously)
without knowing the optimal partitioning of $S$.
This approach breaks down for the algorithm of~\cite{chakraborty2021approximating}, where the candidates are derived from $\Omega(\log n)$ inputs,
thus giving ${n} \choose {\log n}$ candidates overall.
We proceed to present next our new framework for computing $1$-median.

\paragraph{New algorithm for 1-median:}
Our algorithm (in Section~\ref{sec:median}) is based on a three-step framework. First, similar to~\cite{chakraborty2021approximating}, we use the fact that if the input set $S$ contains a permutation $x$ such that the distance between $x,y^*$ is at most $(1-\delta)\ell$ then $x$ serves as a $(2-\delta)$-approximate median. 

Next, for the sake of analysis, we fix an optimal alignment between each input $x$ and $y^*$. 
Let $I_x$ denote the symbols that are not aligned in this optimal alignment. In step two, we consider the scenario where all the permutations are at a distance at least $(1-\delta)\ell$ from $y^*$ and moreover, there is a subset $T \subseteq S$ containing five permutations $x_1,\dots, x_5$ such that for any pair $x_i, x_j$ their corresponding sets of unaligned characters have a very small overlap, i.e., $|I_{x_i}\cap I_{x_j}|\le \epsilon \ell$. In this case, we design an algorithm {\BRO} that just by using $x_1,\dots, x_5$ constructs a permutation $\tilde{x}$ such that the distance between $\tilde{x}$ and $y^*$ is at most $(1-\delta)\ell$. 
Let $B=\cup_{i,j\in [5]}(I_{x_i}\cap I_{x_j})$. Then $|B|\le 10 \epsilon \ell$. 
Now for any pair of symbols $a,b\in [d] \setminus B$, as each of $a,b$ can be unaligned in at most one $x_i$, together they are aligned in at least three out of five permutations $x_1, \dots, x_5$. 
Thus by looking at the relative order of $a$ and $b$ in these five permutations and taking the majority, we correctly deduce their order in $y^*$. 
This observation makes our algorithm framework significantly stronger than~\cite{chakraborty2021approximating} by reducing the dependency on the number of input permutations that are actually required to construct a good approximate median. 
However, we might not get the correct relative order for the pair of symbols that come from the set $B$. Instead, they may create conflict with other useful orders. To solve this, following a similar idea as in~\cite{chakraborty2021approximating}, we create a relative order graph $H$ that contains $d$ vertices corresponding to the $d$ symbols in the input permutations. For every pair of symbols $a,b$, we add an edge from $a$ to $b$ if at least in three of the five permutations from $T$, $a$ appears before $b$; otherwise, add an edge from $b$ to $a$. 
Next, we generate from $H$ an acyclic graph $H'$ by removing cycles iteratively while deleting the smallest one first. Here, since $H$ is a \emph{tournament}, the shortest cycle length is always three. Along with this, each cycle should contain at least one vertex from $B$. 
As $|B|\le 10\epsilon \ell$, this process removes very few vertices, 
and thus most of the symbols survive in $H'$. 
Next, we create a permutation $\tilde{x}$ by taking a topological ordering on $H'$ and appending the missing symbols at the end. 
Our analysis vastly differs from the one given in~\cite{chakraborty2021approximating}, where the relative order graph is not a tournament. This simplifies our cycle removal process and provides a better approximation guarantee. 

Lastly, we consider the case where there are at most four input permutations whose corresponding sets of unaligned symbols in the optimal alignment with $y$ have a small overlap. Again here, for at least one of them (call it $x$), there are $\Omega(n)$ other inputs $x'$ such that the sets of unaligned symbols for $x$ and $x'$ have a large overlap, and thus the distance between $x$ and $x'$ is strictly smaller than $2\ell$. 
Thus considering $x$ as an approximate median provides a better-than-$2$ approximation of the optimal objective value. 
Our algorithm also finds an input permutation that minimizes the total objective 
and finally outputs the best among those generated by {\BRO} and the best input permutation. 

\paragraph{Approximating $k$-median:}
Next, we show how our $1$-median algorithm can be extended to $k$-median for general $k$. 
For analysis purpose, fix $k$ optimal medians $y^*_1,\dots,y^*_k$,
and let $C_i$ be the set of input permutations served by $y^*_i$. 
Following the $1$-median algorithm, either $C_i$ contains a permutation that provides $<2$ approximation, or it includes $5$ permutations using which we can construct the approximate median for $C_i$. 
However, to start with, we do not know the optimal $k$-partition of the input set. 
Nevertheless, we can construct a set $M$ of potential approximate $k$ medians as follows: For each of the ${n} \choose {5}$ different choices of $5$ input permutations, create a permutation using the {\BRO} algorithm and add it to the set $M$. 
Next, add all input permutations to $M$.  
It is straightforward to see that the set $M$ contains $k$ permutations $\tilde{y}_1, \dots, \tilde{y}_k$ such that each $\tilde{y}_i$ is a better-than-$2$ approximate median for $C_i$. 
To identify these $k$ permutations, we try all possible size-$k$ subsets of $M$ and output the one that minimizes the total objective. 
We can bound the overall running time of the algorithm by $n^{O(k)}d^3$. 
The details of this algorithm can be found in Section~\ref{sec:k-median}.
This running time can further be improved to $(k \log (nd))^{O(k)}n^2d^3$ using the sampling technique used in the streaming algorithm described next.

\paragraph{Streaming Algorithm for Approximating k-Median:}
For the $k$-median problem, we have shown that for each cluster $C_i$, at most $5$ permutations are enough to construct an approximate median. However, without knowing the optimal $k$-partitioning of the input $S$, we need to try all possible ${n} \choose {5}$ choices. Unfortunately, this step becomes infeasible when the $n$ input permutations arrive in a stream, and we can afford to store only a few (preferably polylog) of them. Towards this, we design a streaming algorithm that requires several techniques, including an efficient sampling of the input permutations, \emph{coreset} construction, etc. This algorithm provides a $(2-\delta)$-approximation of $k$-median while storing $O(k^2\log^{4} n \log d)$ permutations. Next, we provide a brief overview of this algorithm. The details can be found in Section~\ref{sec:kstreaming}.

\vspace{2mm}
\noindent
\textbf{Sampling procedure.}
Let $\opt$ be the total optimal objective for all $k$ clusters. The key component of our streaming algorithm is a "clever" sampling technique that selects a set $R_1$ of $O(k\log^4 n \log d)$ permutations from the input stream and ensures that for any cluster $C_i$ if its optimal objective $O_i$ is at least a constant fraction of the average objective, i.e., $O_i=\Omega(\frac{\opt}{k})$, then either $R_1$ contains a permutation that serves as a better-than-$2$ approximate median for $C_i$, or there are $5$ permutations in $R_1$ such that applying the {\BRO} algorithm on them we can generate an approximate median. To explain this, we fix a cluster $C_i$ and let $n_i=|C_i|$, $y^*_i$ be an optimal median of $C_i$, and $\ell_i=O_i/n_i$ be the average objective. Next, we argue that if we sample each permutation in the input stream independently, uniformly at random with probability $\log^2 n/n_i$, then the sample set satisfies the above-mentioned properties. 

For this, we first consider the case when at least $1/\log n$ fraction of the permutations in $C_i$ are close, i.e., at a distance $<(1-\delta)\ell_i$ from the optimal median $y^*_i$. Note that all such permutations are good candidates for an approximate median. Also, following our sampling rate, we will sample at least one of them with a high probability.  

Thus from now on, we assume even fewer permutations are close to $y^*_i$. Note, there can be at most $\frac{n_i}{1+\delta}$ permutations in $C_i$ that are at distance $>(1+\delta)\ell_i$ from $y^*_i$. Hence a constant fraction of permutations is at a distance between $(1-\delta)\ell_i$ and $(1+\delta)\ell_i$ from $y^*_i$. Let $C_i^{avg}$ be the set of all these permutations. Now for every permutation $x\in C_i^{avg}$, we define a neighboring set $C(x)$ containing all other permutations $z\in C_i^{avg}$ such that the intersection between the set of unaligned characters in optimal alignments between $x, y^*_i$ and $z, y^*_i$ is large and thus they are close. 
Note here $x$ serves as a better-than-$2$ approximate median for $C(x)$.
Now we call $x$ to be in $C_i^{avg, dense}$ if $|C(x)|$ is large i.e., $|C(x)|=\Omega(|C_i^{avg}|)=\Omega(n_i)$. This means for each permutation $x$ in $C_i^{avg, dense}$ there is a large cluster of size $\Omega(n_i)$ around $x$ and as $x$ serves as a better-than-$2$ approximate median for $C(x)$, overall by considering $x$ to be the approximate median for $C_i$ we get $<2$ approximation of the optimal objective $O_i$.
Now, if $C_i^{avg, dense}$ is large, then again, by our sampling strategy with high probability, we will sample a permutation $x$ from it. 

Lastly, we consider the case where $C_i^{avg, dense}$ is small. Here using an iterative argument, we show that our sampling algorithm will sample at least $5$ permutations such that their corresponding set of unaligned symbols has a small overlap. Thus, using algorithm {\BRO}, we can construct an approximate median. To see this notice as $C_i^{avg, dense}$ is small with high probability, we will sample a permutation $x_1\in C_i^{avg}\setminus C_i^{avg, dense}$ and thus $C(x_1)$ is small. Moreover, as $C(x_1)$ is small, we will sample another permutation $x_2$ such that $C(x_2)$ is small and $x_2\notin C(x_1)$ and thus $x_1$ and $x_2$ have small overlap between their unaligned character set. We can continue this process and show overall, with a high probability, $5$ permutations can be sampled with the small overlap property. Thus we can apply algorithm \BRO on these $5$ strings and compute an approximate median with objective $<2O_i$.

Here for the sampling to work for cluster $C_i$, we use a sampling rate $\log^2 n/n_i$. However, as we do not know $n_i$ in advance, we try all sampling rates in $\{1/(1+\epsilon), 1/(1+\epsilon)^2,\dots, 1/n\}$ and we can ensure that one of them is arbitrarily close to $\log^2 n/n_i$. The challenge here is that for a sampling rate much larger than $\log^2 n/n_i$, the sample size can grow beyond our space limit. For this, whenever a sample set grows beyond $k\log^3 n$, we discard that set. Though this limits the space complexity, it can destroy a good sample set that is necessary to keep from computing an approximate median for $C_i$.
Next, we need to show for the right sampling rate, the sample set size always stays within the limit, and thus we do not miss the useful permutations. To ensure this in the uniform sampling process, we also incorporate a pruning strategy. 
Note with a sampling rate $\log^2 n/n_i$ (or $(1+\epsilon)\log^2 n/n_i$), with high probability, we sample at most $10 \log^2 n$ permutations from $C_i$. However, if there is a cluster $C_j$ such that $|C_j|>>|C_i|$, then we end up sampling much more permutations from $C_j$ and the sampling set size goes beyond $k\log^3 n$, and we discard it. To upper bound this while adding a new permutation to our sample set, we ensure all previously added permutations are at a distance at least $\beta \ell_i$ (for a small constant $\beta$) from it. Now to start with, as we assumed $O_i=\Omega(\frac{\opt}{k})$, we can show that the total number of sampled permutations from all clusters $C_j\neq C_i$ is bounded by $k\log^3 n$ with high probability. Also, because of the pruning, we get an extra additive error $\beta \ell_i$. This can be bounded by setting $\beta$ appropriately.

\vspace{2mm}
\noindent
\textbf{Approximating small objective clusters.}
Now we focus on the clusters whose objective is much smaller than the average objective $\opt/k$. Notice that the total objective contributed by the small objective clusters is very small. For these clusters, it suffices just to give a constant approximation of their optimal cost. For this we use the \emph{monotone faraway sampling} (MFS) from~\cite{braverman2021metric} to sample a set $R_2\subseteq S$ of size $O(k^2\log k \log(kn))$ such that for each cluster $C_i$, $R_2$ contains a permutation $\tilde{y}_i$ such that the objective of $C_i$ w.r.t. $\tilde{y}_i$ is at most $5O_i+\rho \opt/k$. Thus by keeping $\rho$ small, we achieve the required approximation. 

\vspace{2mm}
\noindent
\textbf{Computing approximate $k$-median using coreset.}
After sampling set $R_1$ and $R_2$, following the offline $k$-median algorithm, we design a potential $k$-median set $\tilde{M}$ by adding all permutations of $R_1, R_2$ to $\tilde{M}$. Together with this for each subset $T$ of $R_1$ of size $5$, run algorithm $\BRO(T)$ and add the output $\tilde{x}_T$ to $\tilde{M}$. However, to decide which $k$ permutations from $\tilde{M}$ minimize the total objective, we need access to the actual input set $S$ (which we cannot store using a small space). Now again, to upper bound the space complexity, instead of storing $S$ explicitly, we construct a $(k,\lambda)$-coreset (see Section~\ref{sec:kstreaming} for the definition) for $S$ with respect to the implicit potential median set $M$.
Here $M$ is a set containing all permutations from $S$. 
Moreover, for each subset $T\subseteq S$ of size $5$, the output of $\BRO(T)$ is also present in $M$.
This coreset ensures that 
considering any $k$-size subset of $\tilde{M}$ as a potential $k$-median if we compute
the corresponding objectives for set $S$ and the coreset then these two objectives are close when $\lambda$ is small. Thus we can use the coreset to decide the best candidate $k$-median from $\tilde{M}$.
Moreover, as $|M| \leq O(n^5)$, following~\autoref{thm:coreset-streaming}, we can show that the coreset size is also bounded by $O(k^2\log^2 n)$. We remark that to make our algorithm space efficient, instead of explicitly storing each $\tilde{x}_T$ to $\tilde{M}$, we recompute it whenever $\tilde{x}_T$ is a part of a candidate $k$-median.

\paragraph{Approximating $k$-median with outliers:}
We further extend our $k$-median algorithm in the presence of \emph{outliers}. In the $k$-median with outliers problem, given a parameter $p \in [0,1)$ and input set $S$, the goal is to find a set of at most $k$ permutations that minimizes the $k$-median objective value of a subset of $S$ of size at least $(1-p)|S|$. (See Section~\ref{sec:outlier} for a formal definition.) Using the argument the same as that for the $k$-median problem (without outlier), we claim that a subset of at most $k$ permutations from a candidate set $M$ (of potential medians) of size at most $O(n^5)$ achieves a 1.999-factor approximation for the outlier variant as well (\autoref{thm:k-median-outlier}).

\subsection{Conclusion}
In this paper, we study the (metric) $k$-median problem under the Ulam metric, which is known to be $\mathsf{NP}$-hard even for $k=2$. The Ulam metric is of utter importance because it is a variant of the more general edit metric and an interesting dissimilarity measure over rankings (or permutations). There is a folklore 2-approximation algorithm that works for any metric space, and breaking this factor-$2$ barrier is one of the interesting challenges. Despite being an important metric, the problem under the Ulam metric does not possess any better approximation algorithm. For the special case of $k=1$, there is a $(2-\delta)$-approximation (where $\delta\approx2^{-40}$) algorithm known~\cite{chakraborty2021approximating}. However, that algorithm does not provide any non-trivial result for arbitrary values of $k$.

We provide a 1.9999995-approximation algorithm for the $k$-median problem, with running time $(k \log (nd))^{O(k)}n d^3$. Moreover, our algorithm works in the insertion-only streaming model, using only polylogarithmic (in the number of input permutations) space. Further, we can extend our framework to get a similar result even in the presence of outliers, which is presumably a more complex problem. We also would like to highlight that our framework is not very specific to the Ulam metric; in fact, it (with slight modification) also provides a similar result for other metrics like Kendall's tau defined over rankings/permutations. One exciting direction is to see whether our new framework can give similar results to the other known distance measures involving rankings, such as Spearman's footrule, Minkowski, Cayley, swap-and-mismatch, etc. Another stimulating future direction is to use this new framework to get a similar result for another essential variant of the clustering problem, namely the $k$-center clustering problem (under the Ulam metric). Finally, extending our result to the more general edit metric (even with certain restrictions on the input) would, of course, be super intriguing.

\section{Preliminaries}
\label{sec:prelims}
\paragraph{Notations.}We use $[d]$ to denote the set $\{1,2,\cdots,d\}$. Let $\sym_d$ denote the set of all permutations over $[d]$. Throughout the paper, we consider a permutation $x$ (over $[d]$) as a sequence $a_1,a_2,\cdots,a_d$ such that $x(i)=a_i$.

\paragraph{Ulam metric and the $k$-median problem.}For any two permutations $x,y \in \sym_d$, the \emph{Ulam distance} between them, denoted by $\dt(x,y)$, is the minimum number of character move operations\footnote{A single move operation in a permutation can be thought of as ``picking up'' a character from its position and then ``inserting'' that character in a different position.} that is needed to transform $x$ into $y$. Equivalently, it can be defined as $d-|\lcs(x,y)|$, where $\lcs(x,y)$ denotes a \emph{longest common subsequence} between $x$ and $y$.

For any two permutations (permutations) $x$ and $y$ of lengths $d_x$ and $d_y$ respectively, an \emph{alignment} $g$ is a function that maps $[d_x]$ to $[d_y]\cup \{\bot\}$ such that:
\begin{itemize}
\item $\forall i\in[d_x],\text{if } g(i)\neq \bot, \text{ then } x(i)=y(g(i)) $;
\item For any two distinct $i,j \in [d_x]$ where $g(i), g(j)\neq\bot$, $i<j\Leftrightarrow g(i)<g(j)$.
\end{itemize}

For an alignment $g$ between two permutations (permutations) $x$ and $y$, we say $g$ \emph{aligns} a character $x(i)$ with some character $y(j)$ if and only if $j=g(i)$. Thus the alignment $g$ is essentially a common subsequence between $x$ and $y$.

For any permutation $x \in \sym_d$ and a set $Y\subseteq \sym_d$, let us define the distance between $x$ and $Y$ as $\dt(x,Y):=\min_{y \in Y}\dt(x,y)$ (i.e., the minimum distance between $x$ and a permutation from $Y$). Given a set $S \subseteq \sym_d$ and a subset $Y \subseteq \sym_d$, we define the \emph{median objective value} of $S$ with respect to $Y$ as $\obj(S,Y):=\sum_{x \in S}\dt(x,Y)$. 

Given a set $S\subseteq \sym_d$, the \emph{$k$-median} problem asks to find a subset $Y \subseteq \sym_d$ of size at most\footnote{Here $Y$ is not a multi-set. If we allow $Y$ to be a multi-set, then we can ask $Y$ to be of size exactly $k$. Note, both formulations are equivalent.} $k$ such that $\obj(S,Y)$ is minimized, i.e., $Y^*=\argmin_{Y \subseteq \sym_d: |Y| \le k} \obj(S,Y)$. We refer to the set $Y^*$ as $k$-median of $S$. We refer $\obj(S,Y^*)$ as $\opt(S)$, or simply $\opt$ when $S$ is clear from the context. Note, for $k=1$, $y^*=\argmin_{y\in \sym_d} \obj(S,y)$ is referred to as a \emph{median} (or geometric median or 1-median). We call a set $\tilde{Y}$ a $c$-approximate $k$-median of $S$, for some $c>0$, if $\obj(S,\tilde{Y}) \le c \cdot \opt(S)$. Further, note, each set $\{y_1,\cdots,y_k\}$ induces a partitioning (clustering) of $S$ into $k$-clusters $C_1,\cdots, C_k$, where $C_i:=\{x \in S \mid \dt(x,y_i) \le \dt(x,y_j) \text{ for all }j \ne i\}$ (if for some $x\in S$, $\dt(x,y_i)=\dt(x,y_j)$ for some $i\ne j$, then break the ties arbitrarily to form $C_i$'s).

It is worth emphasizing that in the above definition of the $k$-median problem, the $k$-median set $Y^*$ need not be a subset of the input $S$. In the literature, this variant is sometimes referred to as the continuous $k$-median problem. On the other hand, the discrete variant asks to find a set $Y^*$ of size at most $k$, strictly from $S$ that minimizes the median objective value (over all the subset of $S$ of size at most $k$). It follows directly from the triangle inequality that any optimum discrete $k$-median set is a 2-approximate solution to the (continuous) $k$-median problem.

Since in the discrete version, the median points are necessarily from $S$, by brute force over all the $O(n^k)$ (where $|S|=n$) possibilities, we can compute an optimum solution. We refer to this algorithm as Procedure {\BI} (Procedure~\ref{alg:best-input}). So, the Procedure {\BI} provides a 2-approximate solution to the (continuous) $k$-median problem. The running time is $O(n^k + n^2 d\log d)$, since for any $x,y \in \sym_d$, we can compute $\dt(x,y)$ in $O(d \log d)$ time.

\begin{algorithm}
	\begin{algorithmic}[1]
		\REQUIRE $S\subseteq \sym_d$.
		
		\ENSURE A subset $Y \subseteq S$ of size at most $k$.
		
		
		\STATE For all pairs of permutations $x_i,x_j \in S$, compute $\dt(x_i,x_j)$
		
		\RETURN $\argmin_{Y \subseteq S: |Y|\le k} \obj(S,Y)$.

		\caption{{\BI}$(S,k)$}
		\label{alg:best-input}
	\end{algorithmic}
\end{algorithm}

\section{Approximation Algorithm for $1$-Median}
\label{sec:median}

\begin{theorem}
\label{thm:median}
There is a deterministic polynomial-time algorithm that, given a set $S$ of $n$ permutations over $[d]$, finds a $1.999$-approximate median.
\end{theorem}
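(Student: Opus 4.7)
The plan is to output the best of two families of candidates: every input permutation itself (via Procedure {\BI}) and the output $\tilde{x}_T={\BRO}(T)$ on every $5$-element subset $T\subseteq S$. For the analysis, fix an (unknown) optimal median $y^*$, set $\ell:=\opt(S)/n$, and for each input $x\in S$ fix an optimal alignment with $y^*$, letting $I_x\subseteq[d]$ denote the symbols left unaligned in this alignment, so $|I_x|=\dt(x,y^*)$. Two small constants $\delta,\epsilon>0$, to be tuned at the end, drive a three-case analysis. In \textbf{Case 1}, if some input satisfies $|I_x|\le(1-\delta)\ell$, the triangle inequality gives $\obj(S,x)\le n(1-\delta)\ell+n\ell=(2-\delta)\opt(S)$, and Procedure {\BI} returns at worst this $x$.

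\textbf{Case 2} is when Case 1 fails and there exist five inputs $x_1,\dots,x_5\in S$ with $|I_{x_i}\cap I_{x_j}|\le\epsilon\ell$ for all $i\ne j$. Let $B:=\bigcup_{i<j}(I_{x_i}\cap I_{x_j})$, so $|B|\le 10\epsilon\ell$. Algorithm {\BRO} builds a tournament $H$ on $[d]$ in which $a\to b$ iff a strict majority of $x_1,\dots,x_5$ place $a$ before $b$. For any $a,b\in[d]\setminus B$, each of $a,b$ is unaligned in at most one of the five permutations, so at least three of them agree with the relative order in $y^*$; hence $H$ restricted to $[d]\setminus B$ matches $y^*$, and every directed $3$-cycle of $H$ must contain a vertex of $B$. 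Because $H$ is a tournament, every directed cycle contains a $3$-cycle, so iteratively deleting shortest cycles removes at most $3|B|\le 30\epsilon\ell$ vertices of $[d]$. A topological sort of the residual acyclic graph, with the deleted symbols appended arbitrarily, yields $\tilde x_T\in\sym_d$ whose common subsequence with $y^*$ contains every surviving vertex in its $y^*$-order, so $\dt(\tilde x_T,y^*)\le 30\epsilon\ell$, and $\obj(S,\tilde x_T)\le(1+30\epsilon)\opt(S)<1.999\,\opt(S)$ for $\epsilon$ small enough. In \textbf{Case 3}, no five such inputs exist; a maximal set of pairwise small-overlap inputs has size $\le 4$, and by maximality every other input has $|I_x\cap I_z|>\epsilon\ell$ for some $z$ in the maximal set. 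Pigeonhole yields some $x_0\in S$ with a ``large-overlap neighborhood'' $N:=\{z\in S:|I_{x_0}\cap I_z|>\epsilon\ell\}$ of size $\Omega(n)$, and composing alignments through $y^*$ gives $\dt(x_0,z)\le|I_{x_0}|+|I_z|-|I_{x_0}\cap I_z|$. After a Markov-style pruning restricting attention to inputs with $|I_z|\le(1+\tau)\ell$ (a constant fraction of $S$), summing yields $\obj(S,x_0)\le(2+\tau-\Omega(\epsilon))\opt(S)<1.999\,\opt(S)$ for a suitable $\tau$, again retrieved by {\BI}.

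The main obstacle lies in Case~2: verifying that iterated shortest-cycle removal on the tournament $H$ really terminates after deleting only $O(|B|)$ vertices (rather than cascading) and that the resulting topological order together with the appended block truly achieves $\dt(\tilde x_T,y^*)=O(\epsilon\ell)$; this is where the tournament structure (as opposed to a general directed graph) is crucial, since it forces every cycle to contain a $3$-cycle and thus to incur a charge against $B$. A secondary difficulty is balancing the parameters $\delta,\epsilon,\tau$ so that all three cases simultaneously deliver an approximation ratio of at most $1.999$, which requires a concrete quantitative accounting of how inputs with $|I_z|\gg\ell$ are handled in Case~3.
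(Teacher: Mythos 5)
Your algorithm and Cases 1 and 2 match the paper's proof: the candidate set is $\{\textrm{best input}\}\cup\{\BRO(T):|T|=5\}$, Case 1 is the standard triangle-inequality step, and your Case 2 analysis of the tournament $H$ (every shortest cycle through a vertex is a triangle containing a bad symbol, so at most $3|B|\le 30\epsilon\ell$ symbols are deleted, giving $\dt(\tilde x_T,y^*)\le 30\epsilon\ell$ and $\obj(S,\tilde x_T)\le(1+30\epsilon)\opt$) is exactly the paper's Claim on $\BRO$. The "main obstacle" you flag in Case 2 is in fact not an obstacle and is handled correctly by your own argument.

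The genuine gap is in Case 3, where your quantitative accounting does not close. After Markov pruning to $P=\{z:|I_z|\le(1+\tau)\ell\}$ you only get $|P|\ge\frac{\tau}{1+\tau}n$, so the pigeonholed neighborhood has size $\ge\frac{\tau n}{4(1+\tau)}$ and the total saving is at most $\frac{\epsilon\tau}{4(1+\tau)}\,n\ell$. Meanwhile the additive loss $\dt(x_0,x)-2|I_x|\le|I_{x_0}|-|I_x|\le(\tau+\delta)\ell$ is incurred by up to all $n$ points, for a total loss of order $(\tau+\delta)n\ell$. Beating the loss would require $\epsilon>4(1+\tau)(1+\delta/\tau)>4$, impossible for an overlap fraction $\epsilon<1$; shrinking $\tau$ to control the loss shrinks the guaranteed neighborhood at the same rate. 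A second, related hole: nothing in your argument forces the pigeonholed center $x_0$ to itself satisfy $|I_{x_0}|\le(1+\tau)\ell$; if $x_0$ is a far point the bound $\dt(x_0,x)\le|I_{x_0}|+|I_x|$ exceeds $2|I_x|$ for every non-far $x$. The paper escapes both problems by (i) first disposing of the subcase $\opt_F\ge\tfrac23\opt$ using the input $x_1$ of minimum $|I|$, for which $\dt(x_i,x_1)\le(2-\alpha/2)|I_i|$ on every far point — this guarantees $\opt_{\bar F}\ge\tfrac13\opt$, hence (via $|I_x|\ge(1-\alpha)\ell$) that the non-far points number $\Omega(n)$ with a constant \emph{independent} of the annulus width $\alpha$; (ii) building the candidate $5$-tuple greedily in increasing order of $|I_i|$, so the pigeonholed center satisfies $|I_j|\le|I_i|$ or $|I_j|\le(1+3\alpha)|I_i|$ as needed; and (iii) running the pigeonhole over objective mass ($\opt_R\ge\opt_{\bar F}/3$) and then setting $\alpha=\epsilon/11$ so the per-point loss $3\alpha\,\dt(x_i,x^*)$ is dominated by the saving $\epsilon\,\opt_R$. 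You need some version of step (i) (or an equivalent way to certify that the near-average points carry a constant fraction of $\opt$ independent of $\tau$) before your Case 3 can be made to yield a ratio below $2$.
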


\paragraph{Description of the algorithm.}Let us start with the description of our algorithm. Our algorithm consists of two procedures. The first procedure is {\BI} (by setting $k=1$), which simply outputs a permutation $\tilde{y} \in S$ with the minimum median objective value among all the inputs, i.e., $\tilde{y} = \argmin_{y \in S} \sum_{x \in S} \dt(x,y)$. The second procedure enumerates all subsets of $S$ of size five (i.e., 5-tuples of input permutations) and runs the procedure {\BRO}. For a subset $T \subseteq S$, {\BRO} works as follows: It constructs a directed graph $H$ with vertex set $[d]$ and edge set 
\[
E(H):=\{(a,b)\mid \text{$a$ appears before $b$ in at least three permutations of $T$}\}.
\]
Observe, the graph $H$ is a tournament\footnote{A directed graph is called a \emph{tournament} if between every pair of vertices there is a directed edge.}, but may not be acyclic. Next, the procedure iterates over all the vertices and while iterating over a vertex $v$, it finds a shortest cycle containing $v$ and deletes all its vertices (along with all the incident edges). Let $H'$ be the final resulting acyclic graph. Then the procedure performs a topological sorting on the vertices of $H'$ and let $\tilde{x}'$ denote the sorted ordering. Finally, it appends the remaining symbols ($[d]\setminus V(H')$) at the end of $\tilde{x}'$ in an arbitrary order and outputs the resulting permutation $\tilde{x}$.

\begin{algorithm}
	\begin{algorithmic}[1]
		\REQUIRE $T \subseteq S$.
		
		\ENSURE A permutation $\tilde{x}$ over $[d]$.
		
		
		\STATE $H\gets ([d],E)$ where\\
                $\qquad\quad E = \{(a,b) \mid \text{$a$ appears before $b$ in at least three permutations of $T$}\}  $

		\FORALL{$v \in [d]$}
		
		\STATE $\mathcal{C}_{\min}\gets $ {cycle of minimum length containing $v$ in $H$}  
		
		\STATE $H=H - V(\mathcal{C}_{\min})$ 
		\ENDFOR
		
        \STATE ${H'} \gets H$
		
		\STATE $\tilde{x}' \gets $ permutation formed by topological ordering of $V(H')$
		
		\STATE $\tilde{x} \gets $ permutation formed by appending to  $\tilde{x}'$ the symbols $[d]\setminus V(H')$ in an arbitrary order
		
		\RETURN $\tilde{x}$.

		\caption{{\BRO}$(T)$}
		\label{alg:ptwo}
	\end{algorithmic}
\end{algorithm}

For a subset $T\subseteq S$, let us denote the output of {\BRO} by $\tilde{x}_T$. Consider the set 
\[
M=\{\tilde{y}\} \cup \{\tilde{x}_T \mid \text{ for all }T\subseteq S \text{ such that }|T|=5\}.
\]
The final algorithm {\FA}($S$) (Algorithm~\ref{alg:final}) outputs the best permutation $z$ among the set $M$ that minimizes the median objective value, i.e., $z=\argmin_{y \in M} \sum_{x \in S} \dt(x,y)$.

\begin{algorithm}
	\begin{algorithmic}[1]
		\REQUIRE $S\subseteq \sym_d$.
		
		\ENSURE A subset $Y \subseteq S$ of size at most $k$.
		
		\STATE Initialize an empty set $M$
		
		\STATE $\tilde{y} \leftarrow {\BI}(S,1)$
		
		\STATE Add $\tilde{y}$ to $M$
		
		\STATE For all the subsets $T \subseteq S$ of size 5, run {\BRO}(T) and add the output to $M$
		
		\RETURN $\argmin_{y \in M} \obj(S,y)$.

		\caption{{\FA}$(S)$}
		\label{alg:final}
	\end{algorithmic}
\end{algorithm}

\paragraph{Running time analysis.}Note, each $\dt(x,y)$ computation takes $O(d \log d)$ time. Then the first procedure {\BI} takes only $O(n^2 d \log d)$ time. There are at most $O(n^5)$ subsets of $S$ of size exactly five. For each such subset, the {\BRO} procedure takes $O(d^2)$ time to construct the graph $H$. Then, computing a minimum length cycle passing through a vertex $v$ at each iteration takes $O(d^2)$ time. Since it iterates over all the vertices $v \in [d]$, the running time for the whole cycle removal step is $O(d^3)$. The topological ordering can be performed in $O(d^2)$ time. So the running time of {\BRO} is $O(d^3)$. Hence, the overall running time of the final algorithm {\FA} is $O(n^5 d^3)$. Later in Section~\ref{sec:kstreaming}, we will comment on how to reduce the running time to $\tilde{O}(d^3)$.

\paragraph{Analyzing the approximation factor.}Suppose $S=\{x_1,x_2,\cdots,x_n\}$. Let $x^*$ be an arbitrary optimal median of $S$. So, $\opt(S)=\sum_{x_i \in S}\dt(x_i,x^*)$. For each $x_i \in S$, consider an arbitrary optimal alignment between $x_i$ and $x^*$, and let $I_{x_i}$ (or for brevity, $I_i$) denote the set of unaligned symbols ($\subseteq [d]$) with respect to this alignment. Recall, by the definition, $\dt(x_i,x^*) = |I_i|$ for each $x_i \in S$. WLOG assume, $|I_1| \le |I_2| \le \cdots \le |I_n|$.

Consider $\epsilon=0.03319$ and $\alpha=\epsilon/11$. WLOG assume,
\begin{equation}
\label{eq:I1-assumption}
    |I_1| \ge  (1-\alpha) \opt/n.
\end{equation}
Otherwise, 
\begin{align}
    \label{eq:I1-noassumption}
    \obj(S,x_1) &\le \sum_{x_i \in S}\dt(x_i,x_1) \nonumber\\
    &\le \sum_{x_i \in S}(\dt(x_i,x^*) + \dt(x^*,x_1))&&\text{(by triangle inequality)}\nonumber\\
    &\le (2-\alpha) \opt.
\end{align}
It is straightforward to see that $\obj(S,\tilde{y}) \le \obj(S,x_1)$, and thus the final output $z$ satisfies $\obj(S,z) \le (2-\alpha)\opt$. So from now, we assume~\autoref{eq:I1-assumption}.

\begin{lemma}
\label{lem:two-cases}
Consider $\epsilon=0.03319$ and $\alpha=\epsilon/11$. Then one of the following holds:
\begin{enumerate}
    \item \label{itm:5-tuple}Either there are five inputs $x_{i_1}, x_{i_2},x_{i_3},x_{i_4},x_{i_5}$ (with $i_1<\cdots<i_5$) such that for any two $r,\ell \in \{i_1,i_2,i_3,i_4,i_5\}$ with $\ell > r$, $|I_r \cap I_\ell| \le \epsilon |I_r|$ and $|I_{i_4}| \le (1+\alpha)\opt/n$;
    \item \label{itm:bestinput}Or there exists $x_j \in S$ such that $\obj(S,x_j) \le 1.999\cdot \opt$.
\end{enumerate}
\end{lemma}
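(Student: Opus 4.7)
The plan is to proceed constructively: attempt to build the 5-tuple demanded by condition~\ref{itm:5-tuple}, and if the attempt fails, exhibit an input satisfying condition~\ref{itm:bestinput}.

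The key tool is the \emph{overlap-distance inequality} $\dt(x_i, x_j) \le |I_i| + |I_j| - |I_i \cap I_j|$, which holds because the symbols in $[d] \setminus (I_i \cup I_j)$ are aligned to the same positions of $x^*$ in both optimal alignments of $x_i$ and $x_j$ to $x^*$, so these symbols form a common subsequence of $x_i$ and $x_j$. Summing over $j \in S$ gives the bridge to the median objective:
\[
\obj(S, x_i) \;\le\; n|I_i| + \opt - \sum_{j} |I_i \cap I_j|.
\]

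First I would verify that the threshold $t := \max\{i : |I_i| \le (1+\alpha)\opt/n\}$ satisfies $t > n/2$: using $|I_i| \ge |I_1| \ge (1-\alpha)\opt/n$ for $i \le t$ and $|I_i| > (1+\alpha)\opt/n$ strictly for $i > t$, a direct computation with $\sum_i |I_i| = \opt$ forces $t > n/2$. I would then greedily construct the 5-tuple: set $i_1 = 1$, and for each $s \in \{2, 3, 4\}$ pick the smallest $i_s \in (i_{s-1}, t]$ satisfying $|I_{i_r} \cap I_{i_s}| \le \epsilon |I_{i_r}|$ for every $r < s$; finally, for $s = 5$ pick similarly but allow $i_5 \in (i_4, n]$. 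If this succeeds, condition~\ref{itm:5-tuple} holds.

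Otherwise, the greedy fails at some step $s$, and every candidate $j$ in the relevant range has $|I_{i_r} \cap I_j| > \epsilon |I_{i_r}|$ for some $r < s$. By pigeonhole, some chosen $i_{r^*} \in [t]$ is a close neighbor of at least a $1/(s-1)$-fraction of the range; call this neighbor set $K$. Substituting into the overlap-distance bound and using $(1-\alpha)\opt/n \le |I_{i_{r^*}}| \le (1+\alpha)\opt/n$ yields
\[
\obj(S, x_{i_{r^*}}) \;\le\; (2+\alpha)\opt \;-\; \epsilon(1-\alpha)\frac{|K|}{n}\opt.
\]
The choice $\alpha = \epsilon/11$ with $\epsilon = 0.03319$ is tuned so that the right-hand side is at most $1.999\opt$ whenever $|K|/n \ge (\alpha+0.001)/[\epsilon(1-\alpha)]$, which gives condition~\ref{itm:bestinput}.

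The hard part will be the arithmetic bookkeeping: the constants are set so that the savings $\epsilon(1-\alpha)|K|/n$ only barely exceed the overhead $\alpha + 0.001$ in the tightest regime. The verification splits by the step of failure --- failure at $s = 5$ exploits the tail $(i_4, n]$ of size $\ge n - t$, while failure at $s \le 4$ uses that $(i_{s-1}, t]$ has size close to $t$, which is bounded below by $n/2$ (because greedy keeps $i_{s-1}$ small). The crossover between the two regimes falls comfortably inside $t/n > 1/2$, explaining why $\epsilon$ had to be about $0.033$.
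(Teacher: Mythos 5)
Your overall strategy is the paper's: greedily build the 5-tuple under the standing assumption $|I_1|\ge(1-\alpha)\opt/n$, and when the greedy fails, use the overlap-distance inequality $\dt(x_i,x_j)\le |I_i|+|I_j|-|I_i\cap I_j|$ plus an averaging argument to exhibit one input with many close neighbors and hence objective at most $1.999\cdot\opt$. The only real difference is cosmetic: the paper averages over the \emph{cost} of the neighbor set (after first dispatching the case $\opt_F\ge\frac23\opt$ via $x_1$), whereas you pigeonhole over its \emph{cardinality} using $t\ge n/2$ (which does follow from $|I_1|\ge(1-\alpha)\opt/n$ as you claim); both bookkeeping schemes close with the given constants.

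There is, however, a gap in how you lower-bound the size of the pigeonhole set when the greedy fails. You pigeonhole only over the candidate range of the failing step, and your size bounds for that range are wrong: for failure at $s\le 4$ you say $(i_{s-1},t]$ has size close to $t$ "because greedy keeps $i_{s-1}$ small", but greedy picks the smallest \emph{valid} index, which can be arbitrarily large (if almost all of $[t]$ conflicts with $i_1$, then $i_2,i_3$ are pushed to the end of $[t]$ and the step-$4$ range is empty); and for failure at $s=5$ you bound the tail $(i_4,n]$ by $n-t$, which is $0$ when $t=n$. The repair is exactly what the paper's procedure $\mathcal{A}$ exploits: every index that is \emph{not} selected --- whether it was a candidate at the failing step or was skipped at an earlier step --- violates the closeness test against some already-selected $i_r$, so the pigeonhole should run over all of $[t]\setminus\{i_1,\dots,i_{s-1}\}$ (size $\ge n/2-3$), resp.\ $[n]\setminus\{i_1,\dots,i_4\}$ (size $n-4$). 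With that fix the worst case is $|K|\ge (t-3)/3\ge n/6-1$, and since $\epsilon(1-\alpha)/6\approx 0.0055$ exceeds $\alpha+0.001\approx 0.0040$, your final inequality goes through with room to spare.
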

\begin{proof}
Let us consider the set of \emph{far} points, defined as 
\[
F:=\{x_i \in S \mid |I_i| \ge (1+\alpha)\opt/n\}.
\]
Let $\bar{F}:=S\setminus F$. For any subset $R\subseteq S$, let us define $\opt_R := \sum_{x_i \in R} \dt(x_i,x^*)$. 

Recall, we assume that $|I_1|\le \cdots \le|I_n|$. Then it is straightforward to see that for all $x_i \in S$, $\dt(x_i,x_1) \le \dt(x_i,x^*) + \dt(x^*,x_1) = |I_i| + |I_1|\le 2|I_i| = 2\dt(x_i,x^*)$. Further, observe, by an averaging argument, $|I_1| \le \opt/n$. Then for all $x_i \in F$,
\begin{align*}
    \dt(x_i,x_1) &\le |I_i| + |I_1|&&\text{(by the triangle inequality)}\\
    &\le |I_i| + \opt/n\\
    &\le |I_i| + |I_i|/(1+\alpha) \le (2-\alpha/2) |I_i|.
\end{align*}
As a consequence, we get that
\begin{equation}
    \label{eq:I1-good}
    \obj(S,x_1) \le 2 \opt_{\bar{F}} + (2-\alpha/2)\opt_F = 2\opt - \frac{\alpha}{2}\opt_F.
\end{equation}
So if $\opt_F \ge \frac{2}{3} \opt$, we get that $\obj(S,x_1) \le 1.999\cdot \opt$. So from now, assume
\begin{equation}
    \label{eq:far-light}
    \opt_F < \frac{2}{3}\opt.
\end{equation}

Next, consider the following procedure $\mathcal{A}$ that processes $x_1,\cdots,x_n \in S$ one by one. Initialize a set $T\leftarrow {x_1}$. For each $x_i\in S$, if for all $x_j\in T$, $|I_j \cap I_i| \le \epsilon |I_j|$, add $x_i$ in $T$. Break when $|T|\ge 5$.

It is worth noting that the above procedure is considered only for the sake of analysis. Now when the above procedure terminates, suppose $T=\{x_{i_1},x_{i_2},\cdots,x_{i_5}\}$, where $1=i_1<i_2<\cdots<i_5$, and $x_{i_4} \in \bar{F}$. Then clearly it satisfies Item~\ref{itm:5-tuple} of the statement of the lemma.

If not, then either $x_{i_4} \in F$ or $|T|\le 4$. By the procedure $\mathcal{A}$, $x_{i_4} \in F$ implies that for all $x_i \in \bar{F}$, there exists $x_j \in T \cap \bar{F}$ such that $|I_j \cap I_i| > \epsilon |I_j|$. Then by a simple averaging argument, there exists $j \in T \cap \bar{F}$ and $R\subseteq \bar{F}$ such that 
\begin{itemize}
    \item $\opt_R \ge \frac{\opt_{\bar{F}}}{|T\cap \bar{F}|} \ge \frac{\opt_{\bar{F}}}{3}$; and
    \item For all $x_i \in R$, $|I_j \cap I_i|>\epsilon |I_j|$.
\end{itemize}
Consider this $j \in T \cap \bar{F}$ and $R\subseteq \bar{F}$. It follows from the triangle inequality that 
\begin{enumerate}
    \item[(i)] For all $x_i \in F$, $\dt(x_i,x_j) \le 2 \dt(x_i,x^*)$;
    \item[(ii)] For all $x_i \in \bar{F}$, $\dt(x_i,x_j) \le (2+3\alpha)\dt(x_i,x^*)$;
    \item[(ii)] For all $x_i \in R$, $\dt(x_i,x_j) \le (2-\epsilon)\dt(x_i,x^*)$.
\end{enumerate}
To see this, observe, for any two $x_i,x_r \in S$, $\dt(x_i,x_r) \le |I_i| + |I_r| - |I_i\cap I_r|$. Now, since for all $x_i \in F$, $|I_j| \le |I_i|$, the first item follows. For the second item, observe, for all $x_i \in \bar{F}$, $|I_j| \le (1+\alpha)\opt/n \le (1+3\alpha)|I_i|$ (since $|I_i| \ge (1-\alpha)\opt/n$ by assumption~\autoref{eq:I1-assumption}). For the third item, note, for all $x_i \in R$, $|I_i \cap I_j| > \epsilon |I_j|$, and further $|I_j| \le |I_i|$ (by the description of procedure $\mathcal{A}$).

Thus
\begin{align*}
    \obj(S,x_j) &= (2-\epsilon)\opt_R + 2 \opt_F + (2+3\alpha)\opt_{\bar{F}\setminus R}\\
    &\le 2 \opt - (\frac{\epsilon + 3\alpha}{3} - 3\alpha) \opt_{\bar{F}}&&\text{(since $\opt_R \ge \frac{\opt_{\bar{F}}}{3}$)}\\
    &\le (2-5\alpha/9)\opt &&\text{(by~\autoref{eq:far-light} and $\epsilon=11\alpha$)}\\
    &\le 1.999 \cdot \opt &&\text{(for $\alpha=\epsilon/11=0.03319/11$)}.
\end{align*}
When $x_{i_4}\in \bar{F}$, but $|T| \le 4$, in a similar way we can argue that there exists $x_j \in T$ and $R\subseteq S$ such that
\begin{itemize}
    \item $\opt_R \ge \frac{\opt}{4}$; and
    \item For all $x_i \in R$, $|I_j \cap I_i|>\epsilon |I_j|$.
\end{itemize}
Hence, again, we can argue as before that
\begin{enumerate}
    \item[(i)] For all $x_i \in F$, $\dt(x_i,x_j) \le 2 \dt(x_i,x^*)$;
    \item[(ii)] For all $x_i \in \bar{F}$, $\dt(x_i,x_j) \le (2+3\alpha)\dt(x_i,x^*)$;
    \item[(ii)] For all $x_i \in R$, $\dt(x_i,x_j) \le (2-\epsilon)\dt(x_i,x^*)$.
\end{enumerate}
Thus
\begin{align*}
    \obj(S,x_j) &= (2-\epsilon)\opt_R + 2 \opt_{F\setminus R} + (2+3\alpha)\opt_{\bar{F}\setminus R}\\
    &=2 \opt -\epsilon \opt_R + 3 \alpha \opt_{\bar{F}\setminus R}\\
    &\le 2 \opt -\epsilon \opt_R + 3 \alpha (\opt-\opt_R)\\
    &= 2 \opt - (\frac{\epsilon + 3\alpha}{4} - 3\alpha) \opt&&\text{(since $\opt_R \ge \frac{\opt}{4}$)}\\
    &\le (2-\alpha/2)\opt &&\text{(by setting $\epsilon=11\alpha$)}\\
    &\le 1.999 \cdot \opt &&\text{(for $\alpha=\epsilon/11=0.03319/11$)}.
\end{align*}
This concludes the proof.
\end{proof}

Clearly, if there exists $x_j \in S$ such that $\obj(S,x_j) \le 1.999\cdot \opt$, then 
\[
\obj(S,z) \le \obj(S,\tilde{y}) \le \obj(S,x_j) \le 1.999\cdot \opt.
\]
So it only remains to show that if there are five inputs $x_{i_1}, x_{i_2},x_{i_3},x_{i_4},x_{i_5}$ (with $i_1<\cdots<i_5$) such that
\begin{enumerate}
    \item For any two $r,\ell \in \{i_1,i_2,i_3,i_4,i_5\}$ with $\ell > r$, $|I_r \cap I_\ell| \le \epsilon |I_r|$, and
    \item $|I_{i_4}| \le (1+\alpha)\opt/n$,
\end{enumerate}
then $\obj(S,z) \le 1.999 \cdot \opt$. For that purpose, consider $T=\{x_{i_1}, x_{i_2},x_{i_3},x_{i_4},x_{i_5}\}$, and the output $\tilde{x}_T$ of {\BRO} on input $T$. We want to claim that $\obj(S,\tilde{x}_T) \le 1.999 \cdot \opt$, and hence $\obj(S,z) \le \obj(S,\tilde{x}_T) \le 1.999 \cdot \opt$, which will complete the analysis.

\begin{claim}
\label{clm:BRO-analysis}
For $T=\{x_{i_1}, x_{i_2},x_{i_3},x_{i_4},x_{i_5}\}$, $\obj(S,\tilde{x}_T) \le 1.999 \cdot \opt$.
\end{claim}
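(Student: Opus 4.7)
The plan is to show that $\tilde{x}_T$ is itself close to the optimal median $x^*$, and then derive the bound on $\obj(S,\tilde{x}_T)$ via the triangle inequality. Concretely, I will argue that $\dt(\tilde{x}_T,x^*) \le 3|B|$, where
\[
B \;:=\; \bigcup_{1 \le j < k \le 5} (I_{i_j} \cap I_{i_k}),
\]
is the set of symbols that are unaligned (relative to $x^*$) in at least two of the five selected permutations. Since $|I_r \cap I_\ell| \le \epsilon |I_r|$ for $r<\ell$ (where $|I_{i_1}|\le\cdots\le|I_{i_5}|$), summing over the ten pairs yields $|B| \le \epsilon(4|I_{i_1}|+3|I_{i_2}|+2|I_{i_3}|+|I_{i_4}|) \le 10\epsilon\,|I_{i_4}| \le 10\epsilon(1+\alpha)\,\opt/n$.

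The key structural observation is that for any two symbols $a,b \in [d]\setminus B$, each of $a,b$ lies in at most one $I_{i_j}$, so in at least three of the five permutations of $T$ both $a$ and $b$ are aligned with $x^*$; consequently their relative order in the majority of $T$ must match the order in $x^*$, so the edge of $H$ between $a$ and $b$ points in the same direction as the order prescribed by $x^*$. Hence the sub-tournament of $H$ induced on $[d]\setminus B$ is consistent with a linear order and therefore acyclic. This means every cycle in $H$ (and in every intermediate graph during the cycle-removal phase, since deleting vertices from a tournament does not create new cycles) contains at least one vertex of $B$.

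Next I will use the tournament property that a vertex of a tournament lies in some cycle if and only if it lies in a $3$-cycle. Therefore every cycle deleted by \BRO is a triangle with at least one vertex in $B$. If $c$ such triangles are removed and $b_i\ge 1$ denotes the number of $B$-vertices in the $i$-th triangle, then $c \le \sum_i b_i \le |B|$, so the total number of deleted vertices is $3c \le 3|B|$, of which at most $2|B|$ are non-$B$ vertices. Thus $|V(H')\setminus B| \ge (d-|B|) - 2|B| = d - 3|B|$. Since the topological order of $H'$ respects every edge of $H'$, and edges between non-$B$ vertices agree with $x^*$, the restriction of $\tilde{x}_T$ to $V(H')\setminus B$ is a common subsequence of $\tilde{x}_T$ and $x^*$. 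Hence
\[
\dt(\tilde{x}_T,x^*) \;=\; d - |\lcs(\tilde{x}_T,x^*)| \;\le\; d - |V(H')\setminus B| \;\le\; 3|B| \;\le\; 30\epsilon(1+\alpha)\,\opt/n.
\]

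Finally, the triangle inequality gives, for every $x_i \in S$, $\dt(x_i,\tilde{x}_T) \le \dt(x_i,x^*) + \dt(x^*,\tilde{x}_T)$. Summing over $i$,
\[
\obj(S,\tilde{x}_T) \;\le\; \opt + n\cdot 30\epsilon(1+\alpha)\,\opt/n \;=\; \bigl(1+30\epsilon(1+\alpha)\bigr)\opt.
\]
Plugging in $\epsilon = 0.03319$ and $\alpha = \epsilon/11$ makes $30\epsilon(1+\alpha) < 0.999$, yielding $\obj(S,\tilde{x}_T) \le 1.999\cdot\opt$. The main obstacle is the cycle-removal accounting: justifying that shortest cycles through each $v$ can be taken to be triangles, and tracking that each triangle consumes at least one $B$-vertex so that the total "damage" to $V(H')$ is linear in $|B|$ rather than multiplied by the number of cycle-removal iterations. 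Once that bookkeeping is in place, the rest is a routine application of the triangle inequality.
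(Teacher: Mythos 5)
Your proposal is correct and follows essentially the same route as the paper: the same bad-symbol set $B$ with the bound $|B|\le 10\epsilon|I_{i_4}|$, the same observation that edges of $H$ between good symbols agree with $x^*$ so every deleted cycle is a triangle containing a $B$-vertex, the same accounting giving $|V(H')\cap([d]\setminus B)|\ge d-3|B|$, and the same LCS/triangle-inequality finish. The only cosmetic difference is that you invoke the standard tournament fact that a vertex on a cycle lies on a $3$-cycle, whereas the paper derives the length-$3$ property of the shortest cycle directly via the shortcut argument; both are valid.
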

\begin{proof}
Let us define the set of \emph{bad} symbols as 
\[
B:=\cup_{r \ne \ell \in \{i_1,\cdots,i_5\}} (I_r \cap I_\ell).
\]
Note, 
\begin{equation}
    \label{eq:badsize}
    |B| \le 4 \epsilon |I_{i_1}| + 3\epsilon |I_{i_2}| + 2\epsilon |I_{i_3}| + \epsilon |I_{i_4}| \le 10\epsilon |I_{i_4}|
\end{equation}
(recall, by our assumption $|I_{i_1}| \le \cdots \le |I_{i_4}|$). Let us define the set of \emph{good} symbols as $G:=[d]\setminus B$. Observe, a symbol $a \in G$ if and only if $a \in I_r$ for at most one $r \in \{i_1,\cdots,i_5\}$. Hence, for any two distinct $a,b \in G$, for at least three $x \in T$, $a,b\not \in I_x$, in other words, both $a,b$ are aligned between $x$ and $x^*$. Thus, by the construction of $H$, $(a,b) \in E(H)$ if $a$ appears before $b$ in $x^*$, for every distinct $a,b \in G$.

Next, observe that for any subset $V \subseteq [d]$, for any vertex $v \in [d]$ and a shortest cycle $\mathcal{C}$ containing $v$ in the subgraph $H - V$,
\begin{enumerate}
    \item $\mathcal{C}$ must contain at least one bad symbol (i.e., from $B$);
    \item $\mathcal{C}$ must be of length 3.
\end{enumerate}
The first condition is straightforward since a set of good symbols cannot form a cycle (because they form a directed path according to their ordering in $x^*$). For the second condition, suppose $\mathcal{C}$ is of length strictly greater than 3 and $v,a,b$ are three consecutive vertices in $\mathcal{C}$. Observe, between any two vertices in the subgraph $H - V$, there is a directed edge (because for any two symbols $a_1,a_2$, either $a_1$ appears before $a_2$ or $a_2$ appears before $a_1$ in at least three permutations out of five). So either the edge $(b,v)$ or $(v,b)$ must be in the subgraph $H - V$. In the first case, we get a length 3 cycle consisting of $v,a,b$, and in the second case, we get a shorter cycle (by taking the edge $(v,b)$ while bypassing the vertex $a$ of $\mathcal{C}$) contradicting the fact that $\mathcal{C}$ is a shortest cycle containing $v$.

Due to the above observation, after iterative cycle removal in {\BRO}, we get a subgraph $H'$ with $|V(H') \cap G| \ge |G| - 2|B|=d-3|B|$. Since $\tilde{x}'$ is a topological ordering of the vertices in $V(H')$, the length of a longest common subsequence between $x^*$ and $\tilde{x}'$ must be
\[
|\lcs(\tilde{x}',x^*)| \ge |V(H') \cap G|\ge d-3|B|.
\]
Hence,
\begin{align}
    \label{eq:close}
    \dt(\tilde{x}_T ,x^*) &= d-\lcs(\tilde{x}_T,x^*) \nonumber\\
    &\le d-\lcs(\tilde{x}',x^*)\nonumber \\
    &\le 3|B| \le 30\epsilon |I_{i_4}| &&\text{(by~\autoref{eq:badsize})}.
\end{align}
Recall, $|I_{i_4}| \le (1+\alpha)\opt/n$. So, we get that
\begin{align*}
    \obj(S,\tilde{x}_T)=\sum_{x_i \in S}\dt(x_i ,\tilde{x}_T)&\le \sum_{x_i \in S} (\dt(x_i,x^*)+\dt(x^*,\tilde{x}_T))&&\text{(by triangle inequality)}\\
    &\le (1+30\epsilon(1+\alpha))\opt\\
    &\le 1.999\cdot \opt
\end{align*}
for the choice of $\alpha=\epsilon/11=0.03319/11$.
\end{proof}

\subsection{Extension to the $k$-Median}
\label{sec:k-median}
Now we argue that our algorithm framework described so far can be extended to the $k$-median problem. More specifically, we show the following result.
\begin{theorem}
\label{thm:k-median}
There is a deterministic algorithm, that given a set $S$ of $n$ permutations over $[d]$, finds a $1.999$-approximate $k$-median in time $n^{O(k)}d \log d + O(n^5 d^3)$.
\end{theorem}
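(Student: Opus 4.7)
The plan is to lift the analysis behind Theorem~\ref{thm:median} from a single cluster to $k$ clusters, by observing that the candidate set produced by the 1-median algorithm is \emph{universal} across all possible clusterings. Fix an optimal $k$-median $Y^*=\{y^*_1,\ldots,y^*_k\}$ inducing a partition $C_1,\ldots,C_k$ of $S$, with $O_i:=\obj(C_i,y^*_i)$ and $\opt=\sum_i O_i$. First I would form the candidate set
\[
M \;=\; S \,\cup\, \bigl\{\tilde{x}_T : T\subseteq S,\ |T|=5\bigr\},
\]
i.e., all inputs together with the output of $\BRO$ on every 5-tuple of inputs. Then $|M|=O(n^5)$, and $M$ can be computed in $O(n^5 d^3)$ time using the same analysis as for $\FA$.

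Second, I would apply Lemma~\ref{lem:two-cases} and Claim~\ref{clm:BRO-analysis} separately to each cluster $C_i$, viewed as its own 1-median instance with optimum $y^*_i$ and value $O_i$. The two cases in Lemma~\ref{lem:two-cases} depend only on the input set and its own optimal median, so the conclusion yields, for every $i\in[k]$, a permutation $\tilde{y}_i$ that is either an input in $C_i$ (case~\ref{itm:bestinput}) or equal to $\tilde{x}_T$ for some $T\subseteq C_i$ with $|T|=5$ (case~\ref{itm:5-tuple}, via Claim~\ref{clm:BRO-analysis}). In either case $\tilde{y}_i\in M$ and $\obj(C_i,\tilde{y}_i)\le 1.999\cdot O_i$. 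Summing over $i$ gives
\[
\obj\bigl(S,\{\tilde{y}_1,\ldots,\tilde{y}_k\}\bigr) \;\le\; 1.999\cdot\opt,
\]
so $M$ contains a 1.999-approximate $k$-median as a $k$-subset.

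Third, since the optimal partitioning is unknown, I would enumerate every size-$k$ subset $Y\subseteq M$, compute $\obj(S,Y)=\sum_{x\in S}\min_{y\in Y}\dt(x,y)$, and return the minimizer. Correctness follows from the previous step. For the running time, there are $\binom{|M|}{k}=n^{O(k)}$ candidate subsets; evaluating each requires $O(nk)$ Ulam distance computations of $O(d\log d)$ time each, giving $n^{O(k)} d\log d$ overall, to which we add the $O(n^5 d^3)$ cost of constructing $M$.

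The only conceptual point to verify — and the main (albeit mild) obstacle — is that the 1-median argument must supply its good candidate for $C_i$ from within $C_i$ itself, not from arbitrary elements of $S$; otherwise the candidate set $M$ would not automatically cover all clusters. Luckily, Lemma~\ref{lem:two-cases} does precisely this: the ``best input'' in case~\ref{itm:bestinput} is one of the inputs on which the lemma is invoked, and the 5-tuple in case~\ref{itm:5-tuple} is extracted by an averaging argument within $C_i$. Hence $M$, which already contains all inputs of $S$ and all outputs of $\BRO$ on all 5-tuples of $S$, simultaneously contains a better-than-2 approximate median for every cluster, independently of how the (unknown) optimal partition looks. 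Notably, the construction of $M$ does not depend on $k$; only the final enumeration step does, which is why the exponential dependence on $k$ is confined to the $n^{O(k)}d\log d$ term.
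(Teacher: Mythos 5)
Your proposal is correct and follows essentially the same route as the paper: build the candidate set $M$ from all inputs plus the outputs of \BRO{} on all 5-tuples, argue via the 1-median analysis applied to each optimal cluster $C_i$ that $M$ contains a $1.999$-approximate median for every cluster, and then enumerate all $k$-subsets of $M$. Your explicit verification that Lemma~\ref{lem:two-cases} supplies its good candidate from within $C_i$ itself is exactly the point the paper relies on (stating it only implicitly as ``by the analysis of \FA''), so there is nothing to add.
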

We would like to highlight that the above running time can further be improved to $(k \log (nd))^{O(k)}nd^3$ by running the algorithm described in this section on a sample set. However, such a modification slightly worsens the approximation factor. We describe the sampling procedure in detail in Section~\ref{sec:kstreaming}.

\begin{proof}
Here we briefly describe how to extend our median algorithm to the $k$-median problem. We build a set $M$ by adding the output permutations of the procedure {\BRO}($T$) for all the subsets $T\subseteq S$ such that $|T|=5$. Then, we also add the permutations in the input set $S$ to $M$. Finally, we output a subset $\tilde{Y}$ of $M$, of size at most $k$ that minimizes the $k$-median objective value, i.e., $\tilde{Y}=\argmin_{Y \subseteq M: |Y|\le k} \obj(S,Y)$. We refer to this algorithm as {\FkA}.

By the construction, $|M|=O(n^5)$, and by the running time analysis of the procedure {\BRO}, constructing the set $M$ takes time $O(n^5 d^3)$. The final step that outputs a subset of $M$ which minimizes the objective function, takes $n^{5k+1}d \log d$ time. So the overall running time is $n^{O(k)}d \log d + n^5 d^3$.

To argue about the approximation guarantee of the above algorithm, let us first consider an arbitrary optimal $k$-median $Y^*$ (which is of size at most $k$). This set $Y^*$ implicitly induces a partitioning of $S$ into at most $k$ clusters $C_1,C_2,\cdots,C_k$ (where some of the $C_i$'s could be empty depending on the size of $Y^*$). WLOG assume, $|Y^*|=k$ and $Y^*=\{y^*_1,\cdots,y^*_k\}$. Then $C_i:=\{x \in S \mid \dt(x,y^*_i) \le \dt(x,y^*_j)\text{ for all }j \ne i\}$ (if for some $x\in S$, $\dt(x,y^*_i)=\dt(x,y^*_j)$ for two $y^*_i \ne y^*_j$, then break the ties arbitrarily to form $C_i$'s). It is straightforward to see that $y^*_i$ is an optimal median of the set/cluster $C_i$. Then, by the analysis of {\FA} (in the previous section), a permutation $\tilde{y}_i$ will be added in $M$ that is a 1.999-approximate median of the cluster $C_i$, for all $i \in [k]$. Since in the final step we output a subset $\tilde{Y}$ of $M$, of size at most $k$ that minimizes the $k$-median objective value, clearly it would be a 1.999-approximate $k$-median of the input set $S$.
\end{proof}

\subsection{Extension to the $k$-median with outliers}
\label{sec:outlier}
The algorithm described in the previous section can produce a 1.999-approximate median even in the presence of outliers. In the $k$-median with outliers problem, we are given a parameter $p \in [0,1)$. Given an input set $S$, the problem then asks to find a set of size at most $k$ (which is not necessarily a subset of $S$) that minimizes the $k$-median objective value of a subset of $S$ of size at least $(1-p)|S|$. Formally, we define the objective value of $S$ with respect to a set $Y\subseteq \sym_d$ as $\obj_p(S,Y):=\min_{S'\subseteq S: |S'|\ge (1-p)|S|} \obj(S',Y)$. The problem asks to output a set $Y^* \subseteq \sym_d$ that minimizes $\obj_p(S,Y)$. Note, $\obj_0$ is the same as the standard $k$-median objective function $\obj$ (as defined in Section~\ref{sec:prelims}).

\begin{theorem}
\label{thm:k-median-outlier}
There is a deterministic algorithm, that given a set $S$ of $n$ permutations over $[d]$ and $p \in [0,1)$, finds a $1.999$-approximate solution to the $k$-median with outliers problem with parameter $p$, in time $n^{O(k)}d \log d + n^5 d^3$.
\end{theorem}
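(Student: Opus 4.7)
The plan is to reuse, without modification, the candidate set $M$ constructed in the proof of~\autoref{thm:k-median}, and change only the final selection step: instead of minimizing $\obj(S,\cdot)$ over size-$k$ subsets of $M$, the algorithm will minimize $\obj_p(S,\cdot)$. Recall that $M$ consists of every input permutation together with $\BRO(T)$ for every $5$-subset $T\subseteq S$, so $|M|=O(n^5)$, and its construction depends only on $S$ (it is oblivious to $p$ and to which points are outliers). This obliviousness is the critical feature that makes the extension work.

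To analyze correctness, fix an optimal solution $Y^*=\{y_1^*,\ldots,y_k^*\}$ with value $\opt_p$, and let $S^*\subseteq S$ with $|S^*|\ge (1-p)|S|$ be any subset realizing it, i.e., $\obj(S^*,Y^*)=\opt_p$. The partition of $S^*$ into clusters $C_1^*,\ldots,C_k^*$ induced by $Y^*$ makes each $y_i^*$ an optimal $1$-median of $C_i^*$ (otherwise we could swap it and lower $\opt_p$). Since $\BRO$ has been executed on every $5$-subset of $S \supseteq S^*$, the analysis of~\autoref{thm:median} applied to each cluster $C_i^*$ produces a permutation $\tilde{y}_i\in M$ with $\obj(C_i^*,\tilde{y}_i)\le 1.999\cdot \obj(C_i^*,y_i^*)$. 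Setting $\tilde{Y}:=\{\tilde{y}_1,\ldots,\tilde{y}_k\}\subseteq M$, we then have
\[
\obj_p(S,\tilde{Y}) \le \obj(S^*,\tilde{Y}) \le \sum_{i=1}^{k}\obj(C_i^*,\tilde{y}_i) \le 1.999 \sum_{i=1}^{k}\obj(C_i^*,y_i^*) = 1.999 \cdot \opt_p,
\]
so the algorithm's output, which minimizes $\obj_p(S,\cdot)$ over size-$k$ subsets of $M$, is within a $1.999$ factor of $\opt_p$.

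For the running time, building $M$ costs $O(n^5 d^3)$ exactly as in~\autoref{thm:k-median}. For each of the $|M|^k = n^{O(k)}$ candidate subsets $Y$, evaluating $\obj_p(S,Y)$ reduces to computing $\dt(x,Y)$ for every $x\in S$ in $O(n k\, d\log d)$ time and summing the $\lceil (1-p)n\rceil$ smallest of these values after an $O(n\log n)$ sort, yielding $n^{O(k)}d\log d$ overall. The main conceptual obstacle to watch for is that the witness set $S^*$ is unknown to the algorithm and is not necessarily the optimal outlier set for $\tilde{Y}$; the inequality $\obj_p(S,\tilde{Y}) \le \obj(S^*,\tilde{Y})$ circumvents this by using $S^*$ merely as a feasible restriction in the inner minimization defining $\obj_p$, which is what allows the cluster-by-cluster reduction to go through even though the algorithm never tries to recover $S^*$ explicitly.
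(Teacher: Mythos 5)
Your proposal is correct and matches the paper's own argument: the paper likewise keeps the candidate set $M$ unchanged, swaps the objective to $\obj_p$ in the final selection, and reruns the analysis of~\autoref{thm:k-median} on the witness set $S^*$, which is exactly your cluster-by-cluster reduction via $\obj_p(S,\tilde{Y})\le\obj(S^*,\tilde{Y})$. Your write-up just makes explicit the obliviousness of $M$ and the running-time accounting that the paper leaves implicit.
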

\begin{proof}
The algorithm is the same as that without outliers, i.e., that described in~\autoref{thm:k-median}, with the only exception that now we consider $\obj_p$ as the objective function. So the running time also remains the same. To argue about the approximation guarantee, let us first consider an arbitrary optimal $k$-median $Y^*$. Let the corresponding subset of $S$ be $S^*$ (i.e., $S^* = \argmin_{S'\subseteq S: |S'|\ge (1-p)|S|} \obj(S',Y^*)$). Then the argument would be exactly the same as that in~\autoref{thm:k-median} on the set $S^*$.
\end{proof}

\section{Streaming Algorithm for Approximating $k$-Median}
\label{sec:kstreaming}

In the streaming model, we are given a set $S$ of $n$ permutations $x_1, x_2, \dots, x_n$ over $[d]$ that arrive in a stream. 
Our objective is to design an algorithm that uses space $O(d\log^{20} n\log^6 d)$ and computes $k$ permutations $\tilde{y}_1, \dots, \tilde{y}_k$ over $[d]$ such that $\sum_{x\in S}\min(\Delta(\tilde{y}_1,x),\dots, \Delta(\tilde{y}_k,x))\le (2-\delta)\opt$ for some constant $\delta>0$ in polynomial time. Here $\opt$ denotes the optimal objective.

\mainthm*

Before proving the above theorem, let us first introduce a few tools which will be critical for our algorithm.

\paragraph{Coreset and streaming.}One of the important tools to solve the clustering problem is \emph{coresets}.
\begin{definition}[$(k,\epsilon)$-coreset]
\label{def:coreset}
For a set $S$ of points in an arbitrary metric space $\mathcal{X}$ and an implicit set $X\subseteq \mathcal{X}$ (of potential centers/medians), a weighted subset $P \subseteq S$ (with a weight function $w:P\to\mathbb{R}$) is a $(k,\epsilon)$-coreset of $S$ with respect to $X$ for the $k$-median problem if
\[
(1-\epsilon)\obj(S,Y) \le \sum_{x\in P}w(x)\cdot \dt(x,Y) \le (1+\epsilon)\obj(S,Y)
\]
for all subsets $Y\subseteq X$ of size at most $k$.
\end{definition}

There are several coreset constructions known in the literature. In this paper, we consider the following coreset construction, which is implied from~\cite{feldman2011unified} (and further explained in~\cite{BachemLL18, BJKW21}).
\begin{theorem}[\cite{feldman2011unified, BachemLL18, BJKW21}]
\label{thm:coreset}
There is an algorithm that, given a set $S$ of points of an arbitrary metric space $\mathcal{X}$ and an implicit set $X\subseteq \mathcal{X}$ (WLOG assume $S \subseteq X$), outputs a $(k,\epsilon)$-coreset of $S$ with respect to $X$ for the $k$-median problem, of size $O(\epsilon^{-2} k^2  \log |X|)$.
\end{theorem}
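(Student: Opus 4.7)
The plan is to lift the offline $k$-median algorithm (Theorem~\ref{thm:k-median}) to the streaming model by combining three ingredients: a ``clever'' uniform sampler that produces a compact set $R_1$ of candidate permutations, a monotone faraway sampler that produces a second set $R_2$ handling low-cost clusters, and a streaming coreset that lets us select the best $k$-tuple of candidates at query time. The structural fact we exploit throughout is that, for each cluster $C_i$ in an unknown optimal partition $C_1,\dots,C_k$, either some single input permutation yields a $(2-\delta)$-approximate median for $C_i$, or there exist five permutations of $C_i$ with pairwise small intersection of unaligned-symbol sets on which procedure \BRO{} produces such a median.

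First I would build the uniform sampler. Since neither the cluster sizes $n_i$ nor their average costs $\ell_i=O_i/n_i$ are known online, I would run, in parallel, a sampler $R_{p,\ell}$ for every $p\in\{1,(1+\epsilon)^{-1},\dots,1/n\}$ and every $\ell\in\{1,(1+\epsilon),\dots,d\}$. Upon arrival of $x_t$, sampler $R_{p,\ell}$ accepts $x_t$ with probability $p$, then keeps it only if its Ulam distance to every previously stored member exceeds $\beta\ell$; whenever $|R_{p,\ell}|$ surpasses $k\log^3 n$ the sampler resets to empty. Following the case split sketched in the Technical Overview (close-to-median case, dense-average case, sparse-average case), for every cluster $C_i$ with $O_i=\Omega(\opt/k)$ there exist a rate $p\approx\log^2 n/n_i$ and a scale $\ell\approx\ell_i$ such that $R_{p,\ell}$ w.h.p.\ either contains a single permutation that is a $(2-\delta)$-approximate median for $C_i$, or contains five permutations that together with $\BRO$ reconstruct one. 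The pruning step keeps $|R_{p,\ell}|$ bounded, and the assumption $O_i=\Omega(\opt/k)$ controls contamination from other clusters. Taking $R_1 := \bigcup_{p,\ell} R_{p,\ell}$ gives $|R_1|= k^2\polylog(nd)$.

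Second, I would deal with the low-cost clusters $O_i \ll \opt/k$, whose total contribution is a negligible fraction of $\opt$ and for which a mere $O(1)$-approximation of the cluster cost suffices. Invoking the monotone faraway sampling of~\cite{braverman2021metric}, I maintain a set $R_2\subseteq S$ of size $O(k^2\log k\log(kn))$ such that every cluster $C_i$ contains some $\tilde y_i\in R_2$ with $\obj(C_i,\tilde y_i)\le 5O_i+\rho\opt/k$; taking $\rho$ to be a tiny constant keeps the induced additive slack within the overall $(2-\delta)$ budget. In parallel I maintain a streaming $(k,\lambda)$-coreset $(P,w)$ of $S$ with respect to the implicit potential-center set $X := S\cup\{\BRO(T) : T\subseteq S,\,|T|=5\}$; since $|X|\le n^5$, Theorem~\ref{thm:coreset} (in its streaming form, \autoref{thm:coreset-streaming}) yields $|P|=O(\lambda^{-2}k^2\log n)$ in polylogarithmic space.

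At query time I form the implicit candidate pool $\tilde M$ consisting of all permutations in $R_1\cup R_2$ together with $\BRO(T)$ for every five-subset $T\subseteq R_1$, and I enumerate every $k$-tuple $Y\subseteq\tilde M$, scoring it by $\sum_{p\in P}w(p)\min_{y\in Y}\dt(p,y)$ and outputting the minimizer; each $\BRO(T)$ is recomputed on demand to avoid storing $\tilde M$ explicitly. Correctness follows because the $k$-tuple $\tilde Y\subseteq\tilde M$ guaranteed by the sampler analysis is in the search, and the coreset preserves the $k$-median objective within $(1\pm\lambda)$. Composing errors, the $(2-\delta)$-factor from the offline framework, the $(1+\lambda)$ coreset distortion, the $O(\rho\opt)$ slack from $R_2$, and the additive $\beta\ell$ slack from pruning can each be tuned below any prescribed constant, yielding the $1.9999995$-factor. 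Space is dominated by $|R_1|\cdot d\log d+|P|\cdot d\log d = k^2 d\,\polylog(nd)$; update time is $(k\log n)^{O(1)}d\log^2 d$ (one Ulam distance per sampler level, times $O(\log^2(nd))$ levels); query time is $\binom{|\tilde M|}{k}=(k\log(nd))^{O(k)}$ candidates evaluated in $O(d^3)$ each after recomputing $\BRO$. The hard part will be the sampler analysis: showing that for every large cluster there is a level $(p,\ell)$ at which $R_{p,\ell}$ simultaneously (i) captures the needed one or five permutations of $C_i$, (ii) does not exceed the $k\log^3 n$ cap despite contributions from other clusters, and (iii) does not lose useful points to the $\beta\ell$-distance pruning. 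This is where the new five-input reconstruction framework is essential, since the $\Omega(\log n)$-input dependence of the approach in~\cite{chakraborty2021approximating} would blow up the candidate-pool enumeration and cannot be carried through in small space.
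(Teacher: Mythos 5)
Your proposal does not address the statement you were asked to prove. The statement is \autoref{thm:coreset}: the existence of an \emph{offline} $(k,\epsilon)$-coreset construction of size $O(\epsilon^{-2}k^2\log|X|)$ for $k$-median over an arbitrary metric space with respect to an implicit candidate-center set $X$. What you have written instead is a proof sketch of the paper's main streaming result (\autoref{thm:main}): the sampler $R_{p,\ell}$, the monotone faraway sampling, the candidate pool $\tilde M$, and the final enumeration are all ingredients of the streaming $k$-median algorithm, not of the coreset construction. Worse, your argument explicitly invokes \autoref{thm:coreset} (and its streaming form, \autoref{thm:coreset-streaming}) as a black box to maintain the set $(P,w)$ — so as a proof of \autoref{thm:coreset} it is circular, and as written it establishes nothing about the claimed $O(\epsilon^{-2}k^2\log|X|)$ size bound.

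A proof of the actual statement would follow the sensitivity/importance-sampling framework of Feldman and Langberg (as explained in the cited follow-ups): for each $x\in S$ define its sensitivity $\sigma(x)=\sup_{Y\subseteq X,\,|Y|\le k}\dt(x,Y)/\obj(S,Y)$, bound the total sensitivity $\sum_x\sigma(x)$ by $O(k)$ using a constant-factor bicriteria solution, sample points with probability proportional to (upper bounds on) their sensitivities with appropriate weights, and then argue concentration of the weighted cost simultaneously for every $k$-subset $Y\subseteq X$; since there are at most $|X|^k$ such subsets, a union bound (or a bound on the dimension of the induced function family) contributes the factor $k\log|X|$, and together with the $\epsilon^{-2}$ from Chernoff/Hoeffding and the $O(k)$ total sensitivity this yields the stated $O(\epsilon^{-2}k^2\log|X|)$ size. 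None of these steps appear in your write-up, so the proposal has to be redone from scratch for this statement. (Note also that the paper itself does not reprove this theorem; it imports it from the cited works, which is why identifying the right external framework is the substance of the exercise.)
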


In this paper, we are interested in solving the $k$-median problem over the Ulam metric in the streaming model. We consider the \emph{insertion-only} streaming model, where a set of points (in our case, permutations) $x_1,x_2,\cdots,x_n$ arrive one after another in a streaming fashion. By combining~\autoref{thm:coreset} and the framework provided by~\cite{BravermanFLSZ21}, it is possible to build a coreset for the $k$-median problem over an arbitrary metric space using polylogarithmic space. More specifically, we use the following result.
\begin{theorem}
\label{thm:coreset-streaming}
There is a streaming algorithm that, given a set $S$ of points of an arbitrary metric space $\mathcal{X}$, arriving in an insertion-only stream and an implicit set $X\subseteq \mathcal{X}$ (WLOG assume $S \subseteq X$), maintains a $(k,\epsilon)$-coreset of the input with respect to $X$ for the $k$-median problem, by storing at most $O(\epsilon^{-2} k^2  \log |X| \log n )$ points of $S$. Furthermore, the algorithm has worst-case update time of $(\epsilon^{-1} k \log n)^{O(1)}$.
\end{theorem}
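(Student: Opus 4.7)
The plan is to combine the offline coreset construction of \autoref{thm:coreset} with the classical merge-and-reduce framework (in the Bentley-Saxe style), and then invoke the refined streaming framework of \cite{BravermanFLSZ21} to avoid the usual $\polylog(n)$ blow-up in coreset size. I would maintain buckets organized into $L = O(\log n)$ levels. Level $0$ is a raw buffer that accumulates the incoming stream points until it reaches a threshold $M = O((\epsilon')^{-2} k^2 \log |X|)$, at which point I invoke the offline algorithm of \autoref{thm:coreset} with parameter $\epsilon'$ to compress the buffer into a level-$1$ coreset. Whenever two coresets coexist at the same level $i$, I take their (weighted) union and apply the offline construction once more to obtain a single level-$(i+1)$ coreset. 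Since at most $n$ points are ever inserted, the number of nonempty levels never exceeds $\lceil \log n \rceil + 1$, and the coreset output at query time is the weighted union of all surviving level coresets.

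Correctness rests on two standard composability properties: (i) if $P_i$ is a $(k,\epsilon')$-coreset of $S_i$ with respect to $X$ for $i=1,2$, then $P_1 \cup P_2$ (with combined weights) is a $(k,\epsilon')$-coreset of $S_1 \cup S_2$; and (ii) applying the offline coreset algorithm with parameter $\epsilon'$ to a $(k,\epsilon')$-coreset of $S_i$ yields a $(k, \epsilon'')$-coreset of $S_i$ with $(1+\epsilon'') \le (1+\epsilon')^2$. Thus after $L$ merge-reduce levels, the multiplicative error is at most $(1+\epsilon')^L$, and choosing $\epsilon' = \Theta(\epsilon/\log n)$ gives the desired $(1\pm\epsilon)$ guarantee. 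The naive bound per level becomes $O((\epsilon')^{-2} k^2 \log |X|) = O(\epsilon^{-2} k^2 \log^2 n \log |X|)$ points, leading to a total of $O(\epsilon^{-2} k^2 \log |X| \log^3 n)$ points across all levels, which is looser than what the theorem claims.

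To attain the sharper bound $O(\epsilon^{-2} k^2 \log |X| \log n)$ stated in the theorem, I would replace the naive merge-and-reduce driver by the streaming meta-algorithm of \cite{BravermanFLSZ21}. Their framework converts any offline coreset construction meeting the hypotheses of \autoref{thm:coreset} (namely, a sensitivity/importance-sampling construction whose size is $\poly(\epsilon^{-1}, k, \log|X|)$) into an insertion-only streaming algorithm whose space matches the offline size up to a single $\log n$ factor, rather than the $\log^2 n$ factor incurred by the straightforward reduction. Applying this with the offline construction from \autoref{thm:coreset} of size $O(\epsilon^{-2} k^2 \log|X|)$ yields exactly the claimed $O(\epsilon^{-2} k^2 \log|X| \log n)$ space bound.

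For the running time, each stream arrival triggers at most $O(\log n)$ cascading merges along the bucket hierarchy, and each merge runs the offline coreset algorithm on a set whose size is $\poly(\epsilon^{-1}, k, \log|X|, \log n)$; standard implementations of the sensitivity-sampling coreset of \cite{feldman2011unified} take time polynomial in the input size and $\log |X|$, yielding the claimed worst-case update bound $(\epsilon^{-1} k \log n)^{O(1)}$. The main obstacle is controlling error compounding across the $O(\log n)$ merge-reduce levels; the key content of \cite{BravermanFLSZ21} that I rely on is a nontrivial merging analysis that prevents errors from multiplying level-by-level and thereby removes the extra $\log n$ factor that would otherwise appear in the coreset size.
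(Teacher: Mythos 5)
Your proposal is correct and matches the paper's treatment: the paper does not prove this theorem from scratch but obtains it exactly as you do in your final step, by plugging the offline construction of \autoref{thm:coreset} (from \cite{feldman2011unified}) into the insertion-only streaming framework of \cite{BravermanFLSZ21}, which supplies the single $\log n$ overhead in place of the extra factors incurred by naive merge-and-reduce. Your preliminary merge-and-reduce discussion is a reasonable sanity check but is not needed once you invoke that framework.
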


\paragraph{Monotone Faraway Sampling (MFS).}Another important tool that we will use to design a streaming algorithm for the $k$-median problem is the \emph{monotone faraway sampling (MFS)}, introduced in~\cite{braverman2021metric}. This sampling method allows us to sample "a few" points from an (insertion-only) stream such that the sample set includes a set of candidate medians that achieves $O(1)$-approximation. Although the approximation factor involved is much larger than 2, roughly speaking, it is sufficient for the clusters that contribute a small amount to the overall objective. We use the following result implied from~\cite{braverman2021metric}.

\begin{theorem}[\cite{braverman2021metric}]
    \label{thm:MFS}
    There is a streaming algorithm that, given a set $S$ of points of an arbitrary metric space $\mathcal{X}$, arriving in an insertion-only stream and parameters $\kappa,\rho \in (0,1)$, samples a subset $F\subseteq S$ of size $O(k^2 (\rho \kappa)^{-1}\log k \log (1+k\kappa n))$ such that the following holds: Suppose $Y^*=\{y^*_1,\cdots,y^*_k\}$ be an arbitrary optimum $k$-median of $S$ (where $\opt=\obj(S,Y^*)$) and let $C_1,\cdots,C_k$ denote the induced clustering of $S$. Then for each $i \in [k]$, there exists a $y'_i \in F$ such that 
 \[
 \sum_{x \in C_i}\dt(x,y'_i) \le 2\Big(1+\frac{1}{1-\kappa}\Big)\sum_{x \in C_i}\dt(x,y^*_i) + \rho \frac{\opt}{k}.
 \]
    Moreover, the algorithm requires both space and update time of $O(k^2 (\rho \kappa)^{-1}\log k \log (1+k\kappa n))$.
\end{theorem}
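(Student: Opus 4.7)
The plan is to combine three streaming primitives: (i)~a multi-rate adaptive sampler $R_1$ that handles clusters with large optimal cost, (ii)~a monotone faraway sampler $R_2$ (via Theorem~\ref{thm:MFS}) that handles the remaining small-cost clusters, and (iii)~a streaming $(k,\lambda)$-coreset $(P,w)$ (via Theorem~\ref{thm:coreset-streaming}) that lets us evaluate candidate $k$-medians at query time without storing $S$. The coreset is maintained with respect to the implicit candidate set $M=S\cup\{\BRO(T):T\subseteq S,\ |T|=5\}$, so $|M|=O(n^5)$ and the coreset size is $O(\lambda^{-2}k^2\log n)$. At query time the algorithm forms $\tilde M=R_1\cup R_2\cup\{\BRO(T):T\subseteq R_1,\ |T|=5\}$, which is a subset of $M$, enumerates every $k$-subset $Y\subseteq\tilde M$, evaluates its coreset-objective $\sum_{x\in P}w(x)\,\dt(x,Y)$, and outputs the minimizer.

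For $R_1$, for each pair $(\ell,p)$ with $\ell\in\{1,(1+\gamma),(1+\gamma)^2,\ldots,2d\}$ and $p\in\{1,1/(1+\gamma),\ldots,1/n\}$, I would maintain a subsample $S_{\ell,p}$ that accepts an incoming permutation $x$ with probability $p$, rejects it when some previously accepted permutation is within Ulam distance $\beta\ell$ of $x$ (pruning), and discards the entire bucket if its size ever exceeds $k\log^3 n$. Fix an optimum $k$-median $Y^*=\{y^*_1,\ldots,y^*_k\}$ with induced clusters $C_1,\ldots,C_k$, optimal costs $O_i$, and average distances $\bar d_i=O_i/|C_i|$. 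For each \emph{big} cluster ($O_i\ge\rho'\opt/k$ for a tiny absolute constant $\rho'$), I claim that the bucket $(\ell,p)$ with $\ell\approx \bar d_i$ and $p\approx\log^2 n/|C_i|$ satisfies, w.h.p., at least one of: (a)~$S_{\ell,p}$ contains a permutation at distance $\le(1-\alpha)\bar d_i$ from $y^*_i$, giving a $(2-\alpha)$-approximate median of $C_i$ by the triangle inequality; (b)~$S_{\ell,p}$ contains a ``dense'' permutation whose $\epsilon\bar d_i$-overlap neighborhood captures $\Omega(|C_i|)$ points of $C_i$, so that permutation is itself a $(2-\Omega(1))$-approximate median of $C_i$; or (c)~$S_{\ell,p}$ contains five permutations $x_{r_1},\ldots,x_{r_5}$ of $C_i$ whose pairwise unaligned sets satisfy $|I_{r_a}\cap I_{r_b}|\le \epsilon \bar d_i$, so by Claim~\ref{clm:BRO-analysis} the output of $\BRO(\{x_{r_1},\ldots,x_{r_5}\})$ is a $(2-\Omega(1))$-approximate median of $C_i$. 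The proof of (a)--(c) is the three-case analysis sketched in the Technical Overview: split $C_i$ into \emph{close}, \emph{average-dense}, and \emph{average-sparse} families, and in each family exhibit enough probability mass to be hit by the sampler at rate $p$. The assumption $O_i\ge\rho'\opt/k$ is used to bound, via Markov and Chernoff, the number of permutations from \emph{other} clusters that can survive the $\beta\ell$-pruning at rate $p$, ensuring the bucket is not wiped out by the $k\log^3 n$ cap.

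For \emph{small} clusters ($O_i<\rho'\opt/k$), I would invoke the MFS sampler of Theorem~\ref{thm:MFS} with $\kappa$ a suitable constant and with $\rho=\rho'$, producing $R_2\subseteq S$ of size $O(k^2\log k\log(kn))$ that contains, for every $i$, some $y'_i$ with $\obj(C_i,\{y'_i\})\le O(O_i)+\rho'\opt/k$. Summing over at most $k$ small clusters, their aggregate contribution to the returned solution is $O(\rho'\opt)$. Combining the big- and small-cluster estimates with the coreset property of Definition~\ref{def:coreset}, the minimizer $Y\subseteq\tilde M$ of the coreset-objective has true cost at most $(1+\lambda)(2-\eta)\opt$ for some absolute $\eta>0$; choosing $\rho',\lambda,\alpha,\epsilon,\beta$ as sufficiently small constants pushes the final ratio below $1.9999995$.

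The main obstacle is the $R_1$ analysis: showing that the correct bucket $(\ell,p)$ simultaneously is not emptied by the $k\log^3 n$ cap and also yields one of the three structural outcomes (a), (b), (c), uniformly for every big cluster. Once these are in hand, the rest is bookkeeping: the space bound follows from $O(\log^2(nd)/\log(1+\gamma))$ buckets each of size $\le k\log^3 n$, plus $|R_2|=O(k^2\polylog(kn))$, plus a coreset of size $O(\lambda^{-2}k^2\log|M|\log n)=O(k^2\polylog n)$, where each stored permutation costs $O(d\log d)$ bits, yielding $k^2 d\,\polylog(nd)$ bits overall; the update time is dominated by $O(|R_1|)$ Ulam-distance computations per arrival (each $O(d\log^2 d)$) together with the coreset update of Theorem~\ref{thm:coreset-streaming}, giving $(k\log n)^{O(1)}d\log^2 d$; and the query time is dominated by $\binom{|R_1|}{5}$ invocations of $\BRO$ (each $O(d^3)$) and $\binom{|\tilde M|}{k}$ coreset evaluations, giving $(k\log(nd))^{O(k)}d^3$ as claimed.
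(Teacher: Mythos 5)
Your proposal does not actually address the statement it is supposed to prove. The statement in question is the monotone faraway sampling (MFS) guarantee itself --- an imported result of Braverman et al.\ \cite{braverman2021metric}, which the paper uses as a black box and does not re-prove. Your write-up instead sketches the proof of the paper's main streaming theorem (\autoref{thm:main}): the multi-rate sampler $R_1$, the coreset of \autoref{thm:coreset-streaming}, the candidate set $\tilde M$, the case analysis for big clusters, and the final bookkeeping are all components of the paper's Section~\ref{sec:kstreaming} argument. Within that sketch you explicitly ``invoke the MFS sampler of Theorem~\ref{thm:MFS}'' to handle the small-cost clusters, which makes the attempt circular with respect to the statement at hand: you assume exactly what you were asked to establish.

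A genuine proof of \autoref{thm:MFS} would have to describe the MFS procedure itself and analyze it: how the stream is subsampled so that points ``far'' from previously retained points are kept (with retention probabilities tuned by $\kappa$ and $\rho$), why for every cluster $C_i$ of an optimum solution some retained point $y'_i$ lies close enough to $y^*_i$ that the triangle inequality yields $\sum_{x\in C_i}\dt(x,y'_i)\le 2\bigl(1+\tfrac{1}{1-\kappa}\bigr)\sum_{x\in C_i}\dt(x,y^*_i)+\rho\,\tfrac{\opt}{k}$, and why the number of retained points, the space, and the update time are all $O\bigl(k^2(\rho\kappa)^{-1}\log k\log(1+k\kappa n)\bigr)$. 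None of these ingredients appears in your proposal. To be clear, what you wrote is a reasonable (if condensed) reconstruction of the paper's proof of \autoref{thm:main}, so the effort is not wasted --- but for this particular statement the appropriate answer is either a citation to \cite{braverman2021metric} or a self-contained analysis of their sampling scheme, not a construction that consumes the statement as a subroutine.
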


All the above algorithms are randomized and err with probability at most $1/10$. Now we are ready to describe our streaming algorithm for the $k$-median problem.

\paragraph{Algorithm Description.}
 The algorithm is similar to the $k$-median algorithm described in Section~\ref{sec:k-median}, and we refer to it as {\FkAS}. However, because of the space limitation, instead of storing all the permutations, we run the algorithm on a sample set instead of the whole input. 
 Let us consider the following set $M$ of (implicit) potential $k$-medians: $M$ contains all the input permutations. Further, it also contains all the output of $\BRO(T)$ for all $T \subseteq S$ such that $|T|= 5$. 
 Our algorithm works in two phases. In the first step, it samples a set $R$ of $O(\log ^4n \log d)$ permutations from the stream $x_1, \dots, x_n$. 
 Additionally, it also constructs a coreset $(P, w)$ for the set $M$ of (implicit) $k$-medians on input $S$ in a streaming fashion. 
 Then we show using these sampled permutations and the coreset we can compute $k$ permutations $\tilde{y}_1,\dots, \tilde{y}_k$ such that $\sum_{x\in S}\min(\Delta(\tilde{y}_1,x),\dots, \Delta(\tilde{y}_k,x))\le (2-\delta)\opt$ for some constant $\delta>0$.

\vspace{4mm}
\noindent
\textbf{Step 1 (Sampling Algorithm):}
Given set $S$ and parameters $\beta, \gamma >0$ (the values of which are to be fixed later), we sample a set of permutations $R$ from $S$ as follows. 

For each $\ell\in \{1, (1+\gamma), (1+\gamma)^2, \dots, d\}$ and $p\in \{1, \frac{1}{(1+\gamma)}, \frac{1}{(1+\gamma)^2}, \dots, \frac{1}{n}\}$, we create a set $S_{\ell,p}\subseteq S$ as follows: 

\vspace{2mm}
\noindent
\textbf{Step i)} For each $x_i$, discard $x_i$ with probability $1-p$.

\vspace{2mm}
\noindent
\textbf{Step ii)} If $\forall x_j\in S_{\ell,p}$, $\Delta(x_i,x_j)\ge \beta \ell$, add $x_i$ to $S_{\ell,p}$.

\vspace{2mm}
\noindent
\textbf{Step iii)} If $|S_{\ell,p}|\ge k\log^3 n$, set $S_{\ell,p}=\emptyset$.

Then set $R=\bigcup_{\ell,p}S_{\ell,p}$.

\vspace{4mm}
\noindent
\textbf{Step 2 (Monotone Faraway Sampling):}
 Consider parameters $\kappa=1/3$, $\rho>0$. Given the input set $S$, we use the monotone faraway sampling (MFS) from~\autoref{thm:MFS} due to~\cite{braverman2021metric} to get a subset $F \subseteq S$ of size $O(k^2\log k \log (kn))$ such that the following holds: Suppose $Y^*=\{y^*_1,\cdots,y^*_k\}$ be an arbitrary optimum $k$-median of $S$ (where $\opt=\obj(S,Y^*)$) and let $C_1,\cdots,C_k$ denote the induced clustering (to be defined formally later) of $S$. Then for each $i \in [k]$, there exists a $y'_i \in F$ such that 
 \begin{equation}
     \label{eq:MFS}
     \sum_{x \in C_i}\dt(x,y'_i) \le 2\Big(1+\frac{1}{1-\kappa}\Big)\sum_{x \in C_i}\dt(x,y^*_i) + \rho \opt/k = 5 \sum_{x \in C_i}\dt(x,y^*_i) + \rho \opt/k.
 \end{equation}


\vspace{4mm}
\noindent
\textbf{Step 3 (CoreSet Construction):}
Given a set of input permutations $S$ and a parameter $\lambda>0$, define a set $M$ containing all input permutations from $S$. Further, it also contains the output of \BRO($T$) for all $T\subseteq S$ such that $|T|=5$.
Following~\autoref{thm:coreset-streaming}, construct a $(k,\lambda)$-coreset $(P,w)$ for $S$ with respect to the implicit set $M$ of potential medians.

\vspace{4mm}
\noindent
\textbf{Step 4 (Computing Approximate $k$-median):}

At the end of the stream, we use $R$, $F$ and $(P,w)$ to simulate {\FkA}. More
specifically, we run algorithm \BRO() on every subset $T \subseteq R$ of size five, and then add those outputs $\tilde{x}_T$ to a set
$\tilde{M}$. Next, add all the elements of $R,F$ to $\tilde{M}$. Finally, for each $k$-tuple $(y_1,\dots, y_k) \in \tilde{M}^k$, compute 
$\sum_{x\in P} w(x)\min (\Delta(x, y_1),\dots,\Delta(x, y_k))$ and output the $k$-tuple that attains the minimum value.

\paragraph{Analyzing the algorithm.}
For analysis purpose we fix $k$ optimal medians $y^*_1,y^*_2,\dots,y^*_k$. For $i\in [k]$, define $C_i=\{x\in S \mid \forall j\in [k], \Delta(x,y^*_i)\le \Delta(x,y^*_j)\}$. Let $O_i=\obj(C_i,y^*_i)$. Thus $\opt=O_1+O_2+\dots, O_k$. We show our sampling algorithm satisfies the following.

\begin{lemma}
\label{lemma:streamsamplemain}
Consider $\gamma=0.1$. For any constant $\zeta>0$ and every $i\in [k]$,  if $O_i\ge \frac{\zeta \opt}{k}$, then $\exists \ell\in \{1, (1+\gamma), (1+\gamma)^2, \dots, d\}$ and $p\in \{1, \frac{1}{(1+\gamma)}, \frac{1}{(1+\gamma)^2}, \dots, \frac{1}{n}\}$ such that $S_{\ell,p}$ satisfies at least one of the following with high probability.
\begin{enumerate}
    \item $S_{\ell,p}$ contains a permutation $y$ where $\obj(C_i,y)\le (1.999999+\beta)O_i$.
    \item $S_{\ell,p}$ contains a subset $T=\{x_1,x_2,x_3,x_4,x_5\}$ such that $\obj(C_i,\tilde{x}_T)\le (1.995+61\beta)O_i$
\end{enumerate}
\end{lemma}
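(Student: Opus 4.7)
The plan is to mirror the three-case analysis of the offline $1$-median algorithm (\autoref{lem:two-cases} and \autoref{clm:BRO-analysis}) applied to the cluster $C_i$, while additionally arguing that the right scales $\ell$ and $p$ on our grid yield a sampled set $S_{\ell,p}$ that (a) contains the offline certificate for $C_i$, and (b) is not nullified by Step~(iii). Fix $d_i = O_i/|C_i|$ and pick the unique grid values with $\ell\le d_i<(1+\gamma)\ell$ and $p/(1+\gamma)<2\log^2 n/|C_i|\le p$; both exist because $\gamma=0.1$ and the grids are geometric down to $1/n$ and up to $d\ge d_i$. As in Section~\ref{sec:median}, split $C_i$ into $C_i^{close}$ (Ulam distance $<(1-\alpha)d_i$ from $y^*_i$), $C_i^{avg}$ (distance in $[(1-\alpha)d_i,(1+\alpha)d_i]$), and $C_i^{far}$ (distance $>(1+\alpha)d_i$); the averaging argument from the proof of \autoref{lem:two-cases} gives $|C_i\setminus C_i^{far}|\ge \frac{\alpha}{1+\alpha}|C_i|$.

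The certificate argument follows the technical overview and splits into three subcases. If $|C_i^{close}|\ge |C_i\setminus C_i^{far}|/\log n$, then $p\ge 2\log^2 n/|C_i|$ together with a Chernoff bound ensures Step~(i) samples at least one $y\in C_i^{close}$ w.h.p.; the triangle inequality gives $\obj(C_i,y)\le (2-\alpha)O_i$, and even if $y$ is pruned in Step~(ii) by some earlier $z$ with $\dt(y,z)\le \beta\ell\le \beta d_i$, we still have $\obj(C_i,z)\le (2-\alpha+\beta)O_i$, which is at most $(1.999999+\beta)O_i$ for $\alpha$ chosen as in \autoref{lem:two-cases}. Otherwise $|C_i^{avg}|$ is a constant fraction of $|C_i|$, and we further subdivide according to whether $C_i^{avg,dense}$ (points $x$ for which $|C(x)|$ is a constant fraction of $|C_i^{avg}|$) is large or small; in the dense subcase, Chernoff again gives a sample $x\in C_i^{avg,dense}$ which, as in \autoref{lem:two-cases}, yields a $(1.999999+\beta)$-approximation by exploiting the $\epsilon$-overlap of unaligned symbols across $C(x)$. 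In the sparse subcase I iteratively sample five permutations $x_1,\dots,x_5$ from $C_i^{avg,sparse}$ with pairwise small overlap of their unaligned-symbol sets against $y^*_i$; each step succeeds w.h.p.\ because $|C(x_j)|$ is small for $x_j\in C_i^{avg,sparse}$, so $C_i^{avg,sparse}\setminus \bigcup_{r<j} C(x_r)$ remains a constant fraction of $|C_i|$ for $j\le 5$, and then \autoref{clm:BRO-analysis} applied to $T=\{x_1,\dots,x_5\}$ through \BRO yields $\obj(C_i,\tilde{x}_T)\le (1.9995+61\beta)O_i$, where the $61\beta$ summand arises from the $\beta\ell$ slack accumulated across the five sampled permutations propagating through the cycle-removal analysis.

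The main obstacle is showing that the Step~(iii) threshold $k\log^3 n$ is not crossed, so that this carefully assembled certificate is not erased. Surviving samples drawn from $C_i$ itself are in expectation at most $p\cdot|C_i|\le 2(1+\gamma)\log^2 n$, which Chernoff keeps well below the threshold. For $j\neq i$, I will use the standard packing-from-triangle-inequality argument: surviving samples from $C_j$ are pairwise at Ulam distance $\ge\beta\ell$, so at most one can lie within $\beta\ell/2$ of $y^*_j$, and Markov bounds the remainder by $2O_j/(\beta\ell)$, giving at most $1+2O_j/(\beta\ell)$ surviving samples from $C_j$. Summed over $j\in[k]$ and combined with the sampling rate $p$, the expected number of surviving samples from outside $C_i$ is at most $p\bigl(k+2\opt/(\beta\ell)\bigr)=O\bigl(k\log^2 n\cdot \opt/(\beta O_i)\bigr)=O\bigl(k\log^2 n/(\beta\zeta)\bigr)$, invoking the hypothesis $O_i\ge\zeta\opt/k$. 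Chernoff then keeps this below $k\log^3 n$ w.h.p., provided $\beta$ is taken to be a small absolute constant. Putting the three cases together with this size bound finishes the proof of the lemma.
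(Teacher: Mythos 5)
Your proposal is correct and follows essentially the same route as the paper: the same choice of grid values $\ell,p$, the same three-way split into $C_i^{close}$ being large, $C_i^{avg,dense}$ being large, and the sparse subcase with iterative sampling of five low-overlap permutations fed to \BRO, and the same packing-plus-Markov argument (at most one survivor per foreign cluster near its center, the rest charged to $O_j/(\beta\ell)$ and controlled via $O_i\ge\zeta\opt/k$) to show $S_{\ell,p}$ is never reset in Step~(iii). No gaps worth noting.
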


\begin{proof}
We start with a few definitions. Let $d_i=\frac{O_i}{|C_i|}$. For some constant $\alpha>0$, let $C_i^{far}$ be the set of permutations in $C_i$ whose distance from $y^*_i$ is $\alpha$ fraction more than the average distance $d_i$. Formally we define,

\[C_i^{far}=\{x \mid x\in C_i; \Delta(y^*_i,x)> d_i+\alpha d_i\}.\]

Let $C_i^{close}$ be the set of permutations in $C_i$ whose distance from $y^*_i$ is $\alpha$ fraction less than the average distance $d_i$. Formally we define,

\[C_i^{close}=\{x \mid x\in C_i; \Delta(y^*_i,x)< d_i-\alpha d_i\}.\]

Lastly, let $C_i^{avg}$ be the set of permutations in $C_i$ whose distance from $y^*_i$ is roughly the average distance $d_i$. Formally we define,

\[C_i^{avg}=\{x \mid x\in C_i; d_i-\alpha d_i \le \Delta(y^*_i,x)\le d_i+\alpha d_i\}.\]

\vspace{2mm}
\noindent
\textbf{Case 1:}
First we consider the case where $|C_i^{close}|\ge \frac{|C_i\setminus C_i^{far}|}{\log n}$.
Note by definition $|C_i^{far}|< \frac{|O_i|}{(1+\alpha)d_i}=\frac{|C_i|}{(1+\alpha)}$. Thus $|C_i\setminus C_i^{far}|\ge \frac{\alpha |C_i|}{(1+\alpha)}$ and $|C_i^{close}| \ge \frac{\alpha |C_i|}{(1+\alpha)\log n}$.
We consider the set $S_{\ell,p}$ where $\ell \le d_i<(1+\gamma)\ell$ and $\frac{p}{(1+\gamma)}<\frac{2log ^2 n}{|C_i|} \le  p$. 
As $|C_i^{close}|\ge \frac{\alpha |C_i|}{(1+\alpha)\log n}$, and $p\ge \frac{2 \log^2 n}{|C_i|}$, using Chernoff bound, with high probability the sampling algorithm samples at least $\frac{\alpha \log n}{10(1+\alpha)}$ permutations from $C_i^{close}$ in Step (i). Let $y$ be such a permutation. If $y$ survives in Step (ii) then it satisfies $\obj(y,C_i) \le (2-\alpha)O_i$ by triangle inequality. Otherwise $S_{\ell,p}$ contains a permutation $z$ such that $\Delta(z,y)\le \beta \ell \le \beta d_i$ (as $\ell \le d_i$). Thus $\obj(y,C_i)\le (2-\alpha+\beta)O_i$. 
By setting $\alpha=.0005$, we get $\obj(y,C_i)\le (1.995+\beta)O_i$.
Next, we show $S_{\ell,p}$ is never modified in Step (iii) and thus $z\in S_{\ell,p}$.

Again with high probability in Step (i) we sample at most $10\log^2 n$ permutations from $C_i$. Lastly, we argue that for each other cluster $C_j\neq C_i$ the following holds. Among the permutation sampled from $C_j$, at most one permutation which is at distance $<\frac{\beta \ell}{2}$ from $y^*_j$ survives at Step (ii). Otherwise let there be two permutations $p_1,p_2\in C_j$ such that both $p_1,p_2$ are at distance $<\frac{\beta \ell}{2}$ from $y^*_j$ and both of them survive in Step (ii). However, by triangle inequality, their distance is $<\beta \ell$ and we get a contradiction. Thus all but at most one permutation from $C_j$ that survives in Step (ii) will be at a distance of at least $\beta \ell/2$ from $y^*_j$. Let $T_j\subseteq C_j$ be the set of permutations that are at distance $\ge \beta \ell/2$ from $y^*_j$. Thus 

\begin{align*}
    \sum_{\substack{j\in[k]\\j\neq i}}|T_j| &\le\frac{2}{\beta \ell}\sum_{\substack{j\in[k]\\j\neq i}}O_j\\
    &\le \frac{2\opt}{\beta \ell}\\
    &\le \frac{2kO_i}{\zeta \beta \ell} &&\text{(since } O_i\ge \frac{\zeta \opt}{k} \text{)}\\
    &< \frac{2k(1+\gamma)O_i}{\zeta \beta d_i} &&\text{(since } d_i< (1+\gamma)\ell \text{)}\\
    &\le \frac{4k|C_i|}{\zeta \beta} &&\text{(since } d_i=\frac{O_i}{|C_i|} \text{)}
\end{align*}

As $p<\frac{4 \log^2 n}{|C_i|}$ with high probability we sample at most $\frac{100k \log^2 n}{\zeta\beta}$ permutations from $S\setminus C_i$. Thus with high probability $|S_{\ell,p}|\le k\log^3 n$ and it is never modified in Step (iii). 

\textbf{Case 2:}
Now on, we assume $|C_i^{close}|< \frac{|C_i\setminus C_i^{far}|}{\log n}$. Again as $|C_i\setminus C_i^{far}|\ge \frac{\alpha |C_i|}{(1+\alpha)}$ we have $|C_i^{avg}|\ge \frac{(\log n-1)|C_i\setminus C_i^{far}|}{\log n} \ge \frac{\alpha (\log n-1) |C_i|}{(1+\alpha)\log n}$.
For the analysis purpose, we fix an optimal alignment between each $x\in C_i$ and $y^*_i$ and let $I_{x}$ be the set of symbols from $[d]$ that are unaligned in this optimal alignment.
For a permutation $x\in C_i^{avg}$ we define set $C(x)=\{z| z\in C_i^{avg}; |I_x\cap I_z| \ge \epsilon d_i \}$, where $\epsilon >0$ is a constant.
Let $C_i^{avg,dense}=\{x| x\in C_i^{avg}; |C(x)|\ge \frac{|C_i^{avg}|}{6}\}$.
Here we consider the case where $|C_i^{avg,dense}|\ge \frac{|C_i^{avg}|}{10}\ge \frac{\alpha(\log n-1)|C_i|}{10(1+\alpha)\log n}$. 

Again we consider the set $S_{\ell,p}$ where $\ell \le d_i<(1+\gamma)\ell$ and $\frac{p}{(1+\gamma)}<\frac{2log ^2 n}{|C_i|} \le  p$. 
As $|C_i^{avg,dense}|\ge \frac{\alpha(\log n-1)|C_i|}{10(1+\alpha)\log n}$, and $p\ge \frac{2 \log^2 n}{|C_i|}$, with high probability the sampling algorithm samples at least $\log n/10$ permutations from $C_i^{avg,dense}$ in Step (i). Let $y$ be such a permutation. 
If $y$ survives in Step (ii) then it satisfies $\obj(y,C_i) \le 2O_i+(2\alpha +\frac{2}{\log n}-\frac{\epsilon}{6})|C_i\setminus C_i^{far}|d_i$ by triangle inequality (as in Section~\ref{sec:median}). Otherwise $S_{\ell,p}$ contains a permutation $z$ such that $\Delta(z,y)\le \beta \ell \le \beta d_i$. Thus $\obj(y,C_i)\le (2+\beta)O_i+(2\alpha +\frac{2}{\log n}-\frac{\epsilon}{6})|C_i\setminus C_i^{far}|d_i\le (2+\beta+\frac{2\alpha^2}{(1+\alpha)}+\frac{2\alpha}{(1+\alpha)\log n}-\frac{\epsilon \alpha}{6(1+\alpha)})O_i$. By setting $\epsilon=.0333$ and $\alpha=.0005$ we get $\obj(y,C_i)\le (1.999999+\beta)O_i$.

Following a similar argument as Case 1, we show $S_{\ell,p}$ is never modified in Step (iii).

\vspace{2mm}
\noindent
\textbf{Case 3:}
We define set $C_i^{avg, sparse}=C_i^{avg}\setminus C_i^{avg, dense}$. Now we consider the case where $|C_i^{avg,dense}|< \frac{|C_i^{avg}|}{10}$.
Thus $|C_i^{avg,sparse}|\ge \frac{9|C_i^{avg}|}{10}\ge \frac{9\alpha(\log n-1)|C_i|}{10(1+\alpha)\log n}$.

Again we consider the set $S_{\ell,p}$ where $\ell \le d_i<(1+\gamma)\ell$ and $\frac{p}{(1+\gamma)}<\frac{2log ^2 n}{|C_i|} \le  p$. We argue $S_{\ell,p}$ contains a set of five permutations $T=\{x_1,\dots, x_5\}$ such that $\forall i,j \in [5], |I_{x_i}\cap I_{x_j}|\le (\epsilon+2\beta) d_i$. For this, we define five events $e_1, e_2, \dots, e_5$. Here $e_1$ is the event that at least one permutation is sampled from $C_i^{avg,sparse}$. Note as $|C_i^{avg,sparse}|\ge \frac{9\alpha(\log n-1)|C_i|}{10(1+\alpha)\log n}$ and $p\ge \frac{2log ^2 n}{|C_i|}$ with high probability we sample at least 
$10\log n$ permutations from $C_i^{avg,sparse}$. Let $x_1$ be such a permutation. Next given $e_1$, let $e_2$ be the event that at least one permutation is sampled from $C_i^{avg,sparse}\setminus C(x_1)$. As $|C(x_1)|<\frac{|C_i^{avg}|}{6}$, $|C_i^{avg,sparse}\setminus C(x_1)|\ge \frac{9|C_i^{avg}|}{10}-\frac{|C_i^{avg}|}{6}\ge \frac{11\alpha(\log n-1)|C_i|}{15(1+\alpha)\log n}$. Thus with high probability, we sample at least 
$10\log n$ permutations from $C_i^{avg,sparse}\setminus C(x_1)$. Let $x_2$ be such a sting. In a similar way given $e_1, e_2, \dots e_{j-1}$ (where $j\le 5$) let $e_j$ be the event that at least one permutation is sampled from $C_i^{avg,sparse}\setminus (C(x_1)\cup \dots \cup C(x_{j-1}))$. Again using a similar argument as before we can show $e_j$ is satisfied with high probability and let $x_j$ be a permutation sampled from $C_i^{avg,sparse}\setminus (C(x_1)\cup \dots \cup C(x_{j-1}))$. Thus all these five events are satisfied together with high probability.

Given $T=\{x_1,\dots, x_5\}$, we can construct a permutation $\tilde{x}_T$ using the \BRO() algorithm (Algorithm~\ref{alg:ptwo}), such that $\obj(C_i,\tilde{x}_T)\le (1+30(\epsilon +2\beta)(1+\alpha))O_i$. By setting $\epsilon=.0333$ and $\alpha=.0005$ we get $\obj(C_i,\tilde{x}_T)\le (1.9995+61\beta)O_i$.

Again, following a similar argument as Case 1, we show $S_{\ell,p}$ is never modified in Step (iii).
\end{proof}

\begin{proof}[Proof of~\autoref{thm:main}]
Given the input set of permutations $S$ arriving in a stream, we run Algorithm {\FkAS} with the following parameter setting: $\beta=0.0000001$, $\lambda=0.0000001$ and $\rho=0.00000001$. Let the algorithm outputs $k$ permutations $\tilde{y}_1,\dots,\tilde{y}_k$. We show with high probability $\sum_{x\in S}min(\Delta(\tilde{y}_1,x),\dots, \Delta(\tilde{y}_k,x))\le 1.9999995\opt$, the algorithm uses $k^2 d \polylog(nd))$ bits space and has update time $(k\log n)^{O(1)} d \log^2 d$  and query time $(k\log (nd))^{O(k)}d^3$.


First, we argue the space complexity. In the sampling step (Step 1), there are $O(\log d)$ different choices for $\ell$ and $O(\log n)$ different choices for $p$. Moreover $|S_{\ell,p}|<k\log^3 n$. Thus the sampled set $R$ contains $O(k\log^4 n\log d)$ permutations. Moreover as we consider all subsets $T\subseteq R$ of size 5, there are at most $|R|^5=O(k^5\log^{20} n\log^5 d)$ different choices for $\tilde{x}_T$. However, instead of storing $\tilde{x}_T$ explicitly, we compute it whenever we use $\tilde{x}_T$ as a candidate $k$-median. Thus storing only set $R$ is enough.
Next, by the MFS algorithm of~\autoref{thm:MFS}, the sample set $F$ contains at most $O(k^2 \log k \log (kn))$ input permutations with high probability.
Also, as the size of the implicit set $M$ can be bounded by $O(n^5)$, using~\autoref{thm:coreset-streaming}, we can claim that the size of the coreset $(P,w)$ is $O(k^2 \log^2 n)$.
Each permutation can be stored using $d\log d$ bits. Thus the total space is bounded by $k^2 d \polylog(nd))$.

Next, we show the algorithm has update time $O(k^2d\log^2(kn) \log^2 d)$ and query time $(k\log (nd))^{O(k)}d^3$.
For a given pair of permutations of length $d$, their distance can be computed in time $O(d\log d)$. As $|R|=O(k\log^4 n\log d)$, the update time of the first sampling step is $O(kd\log^4 n\log^2 d)$. 
Next, by the MFS algorithm of~\autoref{thm:MFS}, the update time to construct set $F$ and following~\autoref{thm:coreset-streaming} the update time to construct the coreset $(P,w)$ is $(k \log n)^{O(1)}$. Thus the total update time is $(k\log n)^{O(1)} d \log^2 d$.
Next, we argue the query time. As $|R|=O(k\log^4 n\log d)$, there are at most $|R|^5=O(k^5\log^{20}n \log^5 d)$ different choices for $\tilde{x}_T$. Thus $|\tilde{M}|=k^{O(1)}\polylog(nd)$. Now to bound the space complexity instead of storing $\tilde{x}_T$ explicitly, we compute it whenever it is used as a candidate $k$-median. 
Using Algorithm \BRO() it can be constructed in time $\tilde{O}(d^3)$. Moreover $|\tilde{M}|=k^{O(1)}\polylog(nd)$, total number of different candidate $k$-median is $(k\log (nd))^{O(k)}$. Also evaluating the total objective of a candidate $k$-median using the coreset takes time $O(k^3d\log(kn)\log d)$. Thus the total query time can be bounded by $(k\log (nd))^{O(k)}d^3$.

Lastly, we argue the approximation guarantee. Let $S_1 : = \{i\in [k] \mid O_i<\frac{\opt}{40000000k}\}$. 
Thus $\sum_{i\in S_1}O_i\le \frac{\opt}{40000000}$. By~\autoref{eq:MFS}, for each $i\in [k]$, there exists $y'_i \in F$ such that 
\[
\obj(C_i,y'_i) \le 5 O_i + \frac{\rho}{k} \opt.
\]
Thus
\begin{align*}
    \sum_{i \in S_1} \obj(C_i,y'_i) &\le 5\sum_{i \in S_1} O_i + \rho \opt &&\text{(since $|S_1| \le k$)}\\
    &\le \frac{\opt}{10000000} &&\text{(for $\rho = 0.00000001$)}.
\end{align*}

Next following~\autoref{lemma:streamsamplemain} for each $i\in [k]\setminus S_1$, with high probability, either the sample set $R$ and thus $\tilde{M}$ contains a permutation $y$ such that $\obj(C_i,y)\le 1.9999991 O_i$ (as $\beta=0.0000001$) or $R$ contains a tuple $T=(x_1,\dots,x_5)$ such that $\obj(C_i,\tilde{x}_T)\le 1.9950061 O_i$. In this case in Step (iii) we compute $\tilde{x}_T$ and add this to $\tilde{M}$. Thus $\tilde{M}$ contains $k$ permutations with total objective $\le 1.999999\opt+\frac{\opt}{10000000}=1.9999992 \opt$. As we evaluate this using the $(k,\lambda)$ coreset $(P,w)$, the total objective is $1.9999992(1+\lambda) \opt\le 1.9999995\opt$ as $\lambda=0.0000001$.
\end{proof}

\section{Improved Space Bound for (1-)Median}
\label{sec:streaming-1-median}
In this section, we extend the result of Section~\ref{sec:median} in the streaming model. More specifically, suppose a set of input permutations $x_1,\cdots,x_n \in \sym_d$ arrive one after another as an insertion-only stream (a permutation, once arrived, cannot be deleted). Here we show a result similar to~\autoref{thm:median} in the streaming model using $o(nd)$ bits (note, the input size is $O(nd \log d)$ bits) of space.
\begin{theorem}
\label{thm:1-median-stream}
There is a randomized streaming algorithm that, given a set $S$ of $n$ permutations over $[d]$ arriving on a stream, finds a $1.9999$-approximate median using $O(d\log d \log^2 n)$ bits of space.
\end{theorem}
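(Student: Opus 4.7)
The plan is to specialize the streaming algorithm \FkAS{} from Section~\ref{sec:kstreaming} to the case $k=1$, where several components simplify and the space bound tightens from $k^2 d\polylog(nd)$ to $O(d\log d \log^2 n)$ bits. First, observe that the Monotone Faraway Sampling of Step~2 becomes redundant: for $k=1$ we have $C_1 = S$ and $O_1 = \opt$, so the ``small-objective cluster'' case from the proof of \autoref{thm:main} (clusters with $O_i < \zeta \opt/k$) never arises, hence the $F$-sample can be dropped entirely.

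Next, I would rerun the sampling procedure of Step~1, but with the cutoff in Step~(iii) lowered from $k\log^3 n$ to $\log^2 n$. Applying \autoref{lemma:streamsamplemain} with $k=1$ (and $\zeta=1$, which is free since $O_1 = \opt$), with high probability some pair $(\ell, p)$ gives a sample $S_{\ell,p} \subseteq R$ that either contains a permutation $y$ with $\obj(S,y) \le (1.999999+\beta)\opt$ or contains a $5$-tuple $T$ with $\obj(S, \tilde{x}_T) \le (1.995+61\beta)\opt$ for the output $\tilde{x}_T$ of $\BRO(T)$. In the proof of \autoref{lemma:streamsamplemain}, the bound on $|S_{\ell,p}|$ came from (a)~$O(\log^2 n)$ samples drawn from the target cluster and (b)~$O(k\log^2 n/(\zeta\beta))$ samples from \emph{other} clusters; with $k=1$ the second term vanishes, so $|S_{\ell,p}| = O(\log^2 n)$ suffices and the pruning threshold can indeed be set at $\log^2 n$. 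There are $O(\log d \log n)$ pairs $(\ell, p)$, but since the sampling probability in the useful band already ensures $|R| = O(\log^2 n)$ overall with high probability, I would in fact restrict to the few relevant pairs via a standard geometric guess-and-double on $n_1$ and $\opt$ to keep $|R| = O(\log^2 n)$ permutations.

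For the coreset of Step~3 I would build a $(1,\lambda)$-coreset $(P,w)$ for $S$ with respect to the implicit candidate set $M$ consisting of all input permutations and all $\BRO(T)$ outputs for $T \subseteq S$ of size $5$; since $|M|\le n^5+n$, \autoref{thm:coreset-streaming} with constant $\lambda$ yields $|P| = O(\log |M| \cdot \log n) = O(\log^2 n)$ stored permutations. At query time I enumerate every $y \in R$ together with $\BRO(T)$ for all $T \in \binom{R}{5}$ (this set has size $O(\log^{10} n)$, so we can recompute each $\BRO(T)$ on the fly rather than store it), evaluate $\sum_{x \in P} w(x)\,\dt(x, y)$ against the coreset, and output the minimizer.

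The space bound follows: $R$ contributes $O(d\log d \log^2 n)$ bits and $P$ contributes the same, for a total of $O(d \log d \log^2 n)$ bits. The approximation guarantee is $(1+\lambda)(1.9995+61\beta) < 1.9999$ for $\beta, \lambda$ chosen as sufficiently small constants (e.g.\ $10^{-6}$). The only subtle step is verifying that lowering the pruning threshold to $\log^2 n$ does not destroy the useful sample, which reduces to redoing the Chernoff arguments in Cases~1--3 of \autoref{lemma:streamsamplemain} without the contribution from foreign clusters; this is the one place where care is needed, but it is straightforward since the $k$ factor in those bounds is exactly the one cancelling against the refined threshold.
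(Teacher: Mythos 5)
Your proposal follows the same architecture as the paper's proof: drop the MFS step, sample a small set $R$, build a $(1,\lambda)$-coreset against the implicit candidate set $M$ of size $O(n^5)$, enumerate $R\cup\{\BRO(T): T\in\binom{R}{5}\}$ at query time, and evaluate against the coreset. The one substantive difference is that you retain the ``clever'' sampling of Step~1 (the $(\ell,p)$ grid with the distance-based pruning of Step~(ii)), whereas the paper observes that for $k=1$ this machinery can be discarded entirely: the pruning step and the parameter $\ell$ exist only to prevent a large foreign cluster from flooding $S_{\ell,p}$, and with a single cluster there are no foreign points, so plain uniform sampling at a single rate (reservoir sampling) suffices. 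Retaining the grid is where your version runs into trouble on the space bound.

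Two concrete issues. First, your threshold: at the relevant rate $p\approx 2\log^2 n/n$ the number of samples drawn from the (unique) cluster is, with high probability, as large as $10\log^2 n$ --- this is exactly the bound used inside \autoref{lemma:streamsamplemain} --- so a cutoff of exactly $\log^2 n$ in Step~(iii) would discard the useful set with high probability; you need $c\log^2 n$ for a suitable constant $c$, which is harmless but must be said. Second, keeping all $O(\log d)$ choices of $\ell$ and $O(\log n)$ choices of $p$ gives $|R|=O(\log^3 n\log d)$ stored permutations, exceeding the $O(\log^2 n)$ permutation budget that the claimed $O(d\log d\log^2 n)$-bit bound allows. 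Your proposed repair --- guess-and-double on $\opt$ to isolate the relevant $\ell$ --- is not a standard one-pass primitive here and would need its own argument; the clean fix is the paper's: since $n_1=n$ is (effectively) known and Step~(ii) is vacuous for $k=1$, both grid dimensions collapse and one uniform sample of $O(\log^2 n)$ permutations does the job. With those two repairs your argument goes through and matches the paper's.
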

It is worth emphasizing that the space is only needed to maintain at most $O(\log n)$ permutations at any point of time of the stream.

\paragraph{Description of the algorithm.}The algorithm is essentially the same as that described in Section~\ref{sec:median}, except now we will run the algorithm on a sample set instead of the whole input. Let us consider the following (implicit) set $M$ (of potential medians): $M$ contains all the input permutations. Further, it also contains the output of {\BRO}($T$) for all $T\subseteq S$ such that $|T|=5$. Our algorithm consists of two components. First, it picks each input permutation independently with probability $\log n/n$. (Alternatively, we can use standard reservoir sampling to sample $O(\log n)$ permutations uniformly at random.) Let $R$ denote the sampled set. Second, consider an $\epsilon \in(0,1)$ (the value of which is to be fixed later). Then it constructs a $(1,\epsilon)$-coreset $(P,w)$ (where $P\subseteq S$ and $w:P\to \mathbb{R}$) for $S$  with respect to the (implicit) set $M$, in streaming fashion. At the end of the stream, we use $R$ and $(P,w)$ to simulate {\FA}. More specifically, we run {\BRO} on every subset $T\subseteq R$ of size five and then add those outputs to a set $M'$. Next, add all the elements of $R$ to $M'$. So,
\[
M'=R \cup \{\tilde{x}_T \mid \text{ for all }T\subseteq R \text{ such that }|T|=5\}
\]
where $\tilde{x}_T$ denotes the output of {\BRO}($T$). Finally, for each $y\in M'$, compute $\sum_{x \in P}w(x)\dt(x,y)$, and output one that attains the minimum value.

Since the analysis of the approximation factor is similar to that for the $k$-median problem (as in Section~\ref{sec:kstreaming}), we omit the details. The only difference is that since now we have only one cluster, we can entirely avoid the "clever" sampling used in Step 1 and the MFS sampling (Step 2) of the algorithm described in Section~\ref{sec:kstreaming}. Instead, we can just use the uniform sampling from the input and then follow a similar argument as used in~\autoref{lemma:streamsamplemain}. We focus on analyzing the space usage of our algorithm.

\paragraph{Space usage.}Clearly, the first sampling step requires $O(\log n)$ input permutations to store (with high probability). By~\autoref{thm:coreset-streaming}, the coreset construction needs to maintain at most $O(\epsilon^{-2} \log |M| \log n)=O(\epsilon^{-2} \log^2 n)$ (since $|M|=O(n^5)$) permutations. So the overall space usage is $O(\epsilon^{-2} d \log d \log^2 n )$ bits of space (since to store each permutation over $[d]$, we need $O(d \log d)$ bits of space).

\ifprocs
\bibliographystyle{plainurl}
\else
\bibliographystyle{alphaurl}
\fi
\bibliography{reference}

\end{document}